\renewcommand{\E}{\mathbb{E}}
\theoremstyle{definition}
\newtheorem{theorem}{Theorem}
\newtheorem{proposition}{Proposition}
\newtheorem{corollary}{Corollary}
\newtheorem{lemma}{Lemma}
\newtheorem{definition}{Definition}
\newtheorem{remark}{Remark}
\title{Bounds on the Total Coefficient Size of Nullstellensatz Proofs of the Pigeonhole Principle and the Ordering Principle}
\author{Aaron Potechin and Aaron Zhang}
\date{}
\begin{document}
\maketitle
\abstract{In this paper, we investigate the total coefficient size of Nullstellensatz proofs. We show that Nullstellensatz proofs of the pigeonhole principle on $n$ pigeons require total coefficient size $2^{\Omega(n)}$ and that there exist Nullstellensatz proofs of the ordering principle on $n$ elements with total coefficient size $2^n - n$.}\\
\\
\textbf{Acknowledgement:} This research was supported by NSF grant CCF-2008920 and NDSEG fellowship F-9422254702.
\section{Introduction}
Given a system $\{p_i = 0: i \in [m]\}$ of polynomial equations over an algebraically closed field, a Nullstellensatz proof of infeasibility is an equality of the form $1 = \sum_{i=1}^{m}{p_i{q_i}}$ for some polynomials $\{q_i = 0: i \in [m]\}$. Hilbert's Nullstellensatz\footnote{Actually, this is the weak form of Hilbert's Nullstellensatz. Hilbert's Nullstellensatz actually says that given polynomials $p_1,\ldots,p_m$ and another polynomial $p$, if $p(x) = 0$ for all $x$ such that $p_i(x) = 0$ for each $i \in [m]$ then there exists a natural number $r$ such that $p^r$ is in the ideal generated by $p_1,\ldots,p_m$. 
} says that the Nullstellensatz proof system is complete, i.e. a system of polynomial equations has no solutions over an algebraically closed field if and only if there is a Nullstellensatz proof of infeasibility. However, Hilbert's Nullstellensatz does not give any bounds on the degree or size needed for Nullstellensatz proofs.



The degree of Nullstellensatz proofs has been extensively studied. Grete Hermann showed a doubly exponential degree upper bound for the ideal membership problem \cite{gretehermann} which implies the same upper bound for Nullstellensatz proofs. Several decades later, W. Dale Brownawell gave an exponential upper bound on the degree required for Nullstellensatz proofs over algebraically closed fields of characterisic zero \cite{10.2307/1971361}. A year later, J{\'a}nos Koll{\'a}r showed that this result holds for all algebraically closed fields \cite{kollar1988sharp}. 

For specific problems, the degree of Nullstellensatz proofs can be analyzed using designs \cite{DBLP:conf/dimacs/Buss96}. Using designs, Nullstellensatz degree lower bounds have been shown for many problems including the pigeonhole principle, the induction principle, the housesitting principle, and the mod $m$ matching principles \cite{365714, 10.1006/jcss.1998.1575, 10.1007/BF01294258, 507685, 10.1145/237814.237860}. More recent work showed that there is a close connection between Nullstellensatz degree and reversible pebbling games
\cite{derezende_et_al:LIPIcs:2019:10840} and that lower bounds on Nullstellensatz degree can be lifted to lower bounds on monotone span programs, monotone comparator circuits, and monotone switching networks \cite{10.1145/3188745.3188914}. 

For analyzing the size of Nullstellensatz proofs, a powerful technique is the size-degree tradeoff  showed by Russell Impagliazzo, Pavel Pudl\'{a}k, and Ji\v{r}\'{\i} Sgall for polynomial calculus \cite{10.1007/s000370050024}. This tradeoff says that if there is a size $S$ polynomial calculus proof then there is a polynomial calculus proof of degree $O(\sqrt{n\log{S}})$. Thus, if we have an $\Omega(n)$ degree lower bound for polynomial calculus, this implies a $2^{\Omega(n)}$ size lower bound for polynomial calculus (which also holds for Nullstellensatz as Nullstellensatz is a weaker proof system). 
However, the size-degree tradeoff does not give any size lower bound when the degree is $O(\sqrt{n})$ and we know of very few other techniques for analyzing the size of Nullstellensatz proofs.

In this paper, we instead investigate the total coefficient size of Nullstellensatz proofs. We have two reasons for this. First, total coefficient size is interesting in its own right and to the best of our knowledge, it has not yet been explored. Second, total coefficient size may give insight into proof size in settings where we cannot apply the size-degree tradeoff and thus do not have good size lower bounds. 
\begin{remark}
Note that Nullstellensatz size lower bounds do not imply total coefficient size lower bounds because we could have a proof with many monomials but a small coefficient on each monomial. Thus, the exponential size lower bounds for the pigeonhole principle from Razborov's $\Omega(n)$ degree lower bound for polynomial calculus \cite{razborov1998lower} and the size-degree tradeoff \cite{10.1007/s000370050024} do not imply total coefficient size lower bounds for the pigeonhole principle.
\end{remark}


\subsection{Our results}
In this paper, we consider two principles, the pigeonhole principle and the ordering principle. We show an exponential lower bound on the total coefficient size of Nullstellensatz proofs of the pigeonhole principle and we show an exponential upper bound on the total coefficient size of Nullstellensatz proofs of the ordering principle. More precisely, we show the following bounds.
\begin{theorem}\label{thm:pigeonholelowerbound}
For all $n \geq 2$, any Nullstellensatz proof of the pigeonhole principle with $n$ pigeons and $n-1$ holes has total coefficient size $\Omega\left(n^{\frac{3}{4}}\left(\frac{2}{\sqrt{e}}\right)^{n}\right)$.
\end{theorem}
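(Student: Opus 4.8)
The plan is to work with the Nullstellensatz refutation of the standard encoding of the pigeonhole principle: variables $x_{ij}$ for $i \in [n]$, $j \in [n-1]$, axioms $\sum_{j} x_{ij} - 1 = 0$ (each pigeon sits somewhere), $x_{ij} x_{kj} = 0$ for $i \neq k$ (no two pigeons share a hole), and the Boolean axioms $x_{ij}^2 - x_{ij} = 0$. A Nullstellensatz proof is an identity $1 = \sum p_i q_i$, and the total coefficient size is the sum of absolute values of all coefficients appearing in the $q_i$. The natural strategy for a total-coefficient-size lower bound is a \emph{dual / functional} argument: exhibit a linear functional $L$ on polynomials such that $L(1) = 1$ but $L(p_i q_i)$ is controlled in terms of the coefficient sizes of $q_i$, so that $1 = L(1) = \sum L(p_i q_i) \le (\text{something small}) \cdot (\text{total coefficient size})$, forcing the total coefficient size to be large. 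Concretely, I would look for a probability-like distribution, or more precisely a pseudo-expectation-style functional, under which all the axioms $p_i$ evaluate to something tiny in a suitable averaged sense.

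**Choosing the functional.** The most promising candidate is to average over (near-)assignments coming from permutations. Since $n$ pigeons cannot injectively map to $n-1$ holes, any genuine assignment violates exactly one axiom, but we can consider assignments that map $[n]$ to $[n-1]$ \emph{with exactly one collision}, or equivalently, pick a uniformly random injection from a random $(n-1)$-subset of pigeons to the holes and handle the leftover pigeon separately. Under such a distribution $\mathcal{D}$, the "pigeon axiom" $\sum_j x_{ij} - 1$ vanishes for all but the special pigeon, the "hole axiom" $x_{ij}x_{kj}$ vanishes except for the one collision, and Boolean axioms vanish identically. The point is that $\E_{\mathcal{D}}[p_i(\mathbf{x})]$ is either $0$ or exponentially small (of order $1/n$ or $2^{-\Theta(n)}$ depending on the exact construction), while each monomial $m$ satisfies $|\E_{\mathcal{D}}[m(\mathbf{x})]| \le 1$. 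Then $1 = \E_{\mathcal{D}}[\sum_i p_i q_i] = \sum_i \E_{\mathcal{D}}[p_i q_i]$, and bounding $|\E_{\mathcal{D}}[p_i q_i]|$ by (max over monomials of $|\E_{\mathcal{D}}[p_i \cdot m]|$) times the coefficient size of $q_i$ gives a lower bound on total coefficient size of the form $1/(\text{max axiom expectation})$. To get the sharp constant $(2/\sqrt{e})^n$ with the $n^{3/4}$ factor, I would need to compute these expectations precisely — this is where Stirling's approximation enters, since the relevant quantities are ratios like $\binom{n}{n/2}/2^n \sim \sqrt{2/(\pi n)}$ and products thereof, and the constant $2/\sqrt{e}$ strongly suggests a term $n!/n^n \cdot$ (something) $\sim \sqrt{2\pi n}\, e^{-n}$ being inverted and combined with a $2^n$ or $\binom{n}{n/2}$-type factor.

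**Executing the counting.** After fixing $\mathcal{D}$, the core computation is: for each axiom type, and each monomial $m = \prod_{(i,j) \in S} x_{ij}$, evaluate $\E_{\mathcal{D}}[p_i \cdot m]$ exactly as a ratio of counts of partial permutations, then maximize over $m$. For the hole axioms $x_{ij}x_{kj}$ this is the probability that a random near-permutation places both pigeons $i,k$ in hole $j$ (times the conditional probability of the extra monomial), which is $\Theta(1/n^2)$ times a factor; for the pigeon axioms it is the probability mass on the "defective" pigeon being $i$; summing the per-axiom bounds and using $\sum_i |\E[p_i q_i]| \ge 1$ yields total coefficient size $\ge c \cdot f(n)$ where $f(n)$ should come out to $n^{3/4}(2/\sqrt e)^n$ after the Stirling dust settles.

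**Main obstacle.** The conceptual step — building a functional under which all axioms are simultaneously exponentially small — is the heart of the argument and the place most likely to need a clever choice; a naive uniform distribution over partial injections may make one axiom type only polynomially small, which kills the exponential bound, so the distribution likely has to be skewed (e.g. weighting assignments by how "balanced" the collision is, or taking a signed combination reminiscent of an inclusion–exclusion / design-based dual solution à la the Nullstellensatz-degree lower bound literature). I expect the real work is in \emph{(a)} identifying the right functional so that the worst-case axiom expectation is genuinely $2^{-\Theta(n)}$ and not larger, and \emph{(b)} pinning down the constant and the polynomial prefactor via careful Stirling-type asymptotics; the linear-algebra / duality packaging around it should then be routine.
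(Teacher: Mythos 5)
Your framework (exhibit a linear functional $D$ with $D(1)$ large and $|D(W)|\le 1$ on every weakening $W$ of an axiom, then apply it to the identity $1=\sum_i p_iq_i$) is exactly the duality the paper uses, but the concrete functional you propose cannot work, and the gap is not just in ``pinning down constants.'' If $D$ is a genuine (nonnegative) distribution over assignments in which every pigeon occupies a hole --- uniform or skewed, with one collision or several --- then every assignment in the support violates some hole axiom $x_{i_1,j}x_{i_2,j}$, so summing $D(x_{i_1,j}x_{i_2,j})$ over all $\binom{n}{2}(n-1)$ hole axioms already accounts for all of $D(1)$. Hence some axiom (itself a legal weakening) receives value at least $D(1)/O(n^3)$, and after normalizing the dual value is $O(n^3)$. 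No amount of reweighting a probability distribution escapes this; the functional must be \emph{signed}. The paper's $D$ is $\sum_{S\subsetneq[n]}c_SJ_S$ where $J_S$ indicates that the pigeons in $S$ occupy distinct holes and the $c_S$ alternate in sign, chosen so that $\E(Dp)=\E(J_{[n]\setminus\{i\}}p)$ whenever $p$ ignores pigeon $i$; this telescoping identity is what makes every weakening of a pigeon axiom evaluate to $0$ and is the construction your plan defers to its last sentence without supplying.

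There is a second missing idea that your ``executing the counting'' step glosses over: once $D$ is signed, bounding $\max_W|\E(DW)|$ over all (exponentially many) weakenings is not routine counting. The paper shows explicitly that the naive bound $|\E(DW)|\le\lVert D\rVert\,\lVert W\rVert$ via Cauchy--Schwarz is too weak to give any exponential lower bound. The fix is a symmetry argument: flipping the allowed hole-set of any pigeon $i_3\notin\{i_1,i_2\}$ negates $\E(DW)$, so one can replace $W$ by the signed combination $W^{\{-1,0,1\}}=\sum_S(-1)^{|S|}W^{\mathrm{flip}}_S$ of all $2^{n-2}$ flips, whose $L^2$ norm is only $(n-1)^{-1}$, and apply Cauchy--Schwarz to that instead; this gains the factor $2^{n-2}$ that produces the $(2/\sqrt{e})^n$ rate after computing $\lVert D\rVert^2\le n!/(n-1)^{n-1}$. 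Without the signed construction and this amplification step, your outline yields at best a polynomial bound.
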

\begin{theorem}
For all $n \geq 3$, there is a Nullstellensatz proof of the ordering principle on $n$ elements with size and total coefficient size $2^{n} - n$.
\end{theorem}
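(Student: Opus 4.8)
The plan is to exhibit an explicit proof. I would encode the negated ordering principle with a Boolean variable $x_{ij}$ for each ordered pair $i \neq j$ meaning ``$i$ precedes $j$'', together with Boolean axioms $x_{ij}^2 - x_{ij}$, antisymmetry axioms $x_{ij} + x_{ji} - 1$, transitivity axioms $x_{ij} x_{jk}(1 - x_{ik})$ for distinct $i,j,k$, and for each $j \in [n]$ a non-minimality axiom $M_j := \prod_{i \neq j}(1 - x_{ij})$. Passing to the reduced set of $\binom{n}{2}$ variables so that antisymmetry is built in, the polynomial $M_j$ is exactly the indicator that $j$ is the $\prec$-minimum of $[n]$. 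Since each $M_j = 0$ is an axiom, producing a Nullstellensatz proof reduces to a polynomial identity certifying that $\sum_{j=1}^n M_j \equiv 1$ modulo the transitivity and Boolean axioms; combinatorially this is the statement that a tournament with no directed triangle has exactly one source. The proof will then be $1 = \sum_{j} M_j - (\text{the transitivity/Boolean combination witnessing this congruence})$.

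To build the congruence I would telescope over prefixes. For $S \subseteq [n]$ and $m \in S$ let $\mu_S^m$ be the product over $i \in S \setminus \{m\}$ of the literal ``$m$ precedes $i$'', so that $\mu_{[n]}^m = M_m$ and $\sum_{m \in S}\mu_S^m$ is the number of $\prec$-minima of $S$. Writing $S_k = \{1,\dots,k\}$, we have $\sum_{m \in S_1}\mu_{S_1}^m = 1$ identically, and the increment
\[
  D_k \;=\; \sum_{m \in S_{k+1}}\mu_{S_{k+1}}^m - \sum_{m \in S_k}\mu_{S_k}^m \;=\; \prod_{i \in S_k}(1 - x_{i,k+1}) \;-\; \sum_{m \in S_k}\mu_{S_k}^m\,(1 - x_{m,k+1})
\]
is the difference of the indicators ``$k{+}1$ precedes all of $S_k$'' and ``$k{+}1$ precedes $\min S_k$'', which vanishes on every transitive tournament. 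The core step is to give each $D_k$ an explicit transitivity certificate, which I would obtain by a second, nested telescoping that compares $k{+}1$ against the elements of $S_k$ one at a time, so that every elementary move produces a single transitivity axiom $x_{ab}x_{b,k+1}(1 - x_{a,k+1})$ multiplied by a product of the remaining literals; the first nontrivial case is $D_2 = -\,x_{12}x_{23}(1-x_{13}) - x_{13}(1-x_{12})(1-x_{23})$. Since $D_1 = 0$, adding up the telescope gives $1 = \sum_{j} M_j - \sum_{k=2}^{n-1} D_k$, and substituting the certificates yields a bona fide Nullstellensatz proof.

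The main obstacle is the bookkeeping needed to make the bound \emph{exact}. I must arrange the nested telescoping so that (i) every coefficient that ever appears is $\pm 1$, and (ii) the reductions do not generate spurious high-degree monomials that must be cancelled later or terms that are counted twice; then the size and the total coefficient size of the proof both equal the number of axiom--monomial incidences. I would then prove, by induction on $k$, a clean recurrence for the number of transitivity terms appearing in the certificate of $D_k$ together with a recurrence for the running total, and check that it solves to the target
\[
  n \;+\; \sum_{k=2}^{n-1}\bigl(2^{k}-2\bigr) \;=\; n + (2^{n}-4) - 2(n-2) \;=\; 2^{n}-n ,
\]
with the $n$ coming from the non-minimality axioms and the geometric contribution coming from the nested telescopes. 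Getting the count to be \emph{exactly} $2^{n}-n$ rather than merely $O(2^{n})$ is the delicate part, and it is where the choice of telescoping order and a careful inductive accounting are essential; once the construction is pinned down, verifying that the claimed identity $1 = \sum_i p_i q_i$ actually holds is a routine expansion.
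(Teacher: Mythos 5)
Your overall strategy is sound and genuinely different from the paper's. You reduce the theorem to certifying $\sum_j M_j \equiv 1$ modulo the transitivity axioms, telescope over the prefixes $S_k=\{1,\dots,k\}$, and propose to certify each increment $D_k$ separately; the arithmetic $n+\sum_{k=2}^{n-1}(2^k-2)=2^n-n$ is correct, $D_1=0$ is correct, and your two-term certificate for $D_2$ checks out. The paper instead proceeds by a doubling induction: the proof for $n+1$ elements is assembled from \emph{two} modified copies of the set $S_n$ of transitivity weakenings used for $n$ elements (one copy multiplied by $x_{1,n+1}$, one copy shifted to $\{2,\dots,n+1\}$ and multiplied by $x_{n+1,1}$), together with $2(n-1)$ extra weakenings of the transitivity axioms on triples $(1,i,n+1)$; the recursion $T_{n+1}=2T_n+2(n-1)$ with $T_3=2$ is what yields the exact count $2^n-2n$. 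Your scheme is additive (old weakenings are kept verbatim and new ones are appended), the paper's is multiplicative (all old weakenings are replaced by longer ones at each step), so the two proofs would produce different sets of weakenings even though the totals agree.

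The gap is that the one step carrying all the difficulty --- an explicit certificate for $D_k$ as a $\pm1$ combination of exactly $2^k-2$ monomial weakenings of transitivity axioms --- is never constructed; you verify it only for $k=2$ and defer the rest to an unspecified ``nested telescoping'' and ``careful inductive accounting,'' which you yourself flag as the delicate part. Worse, the natural implementation of your nested telescoping does not hit the target. Writing $y_i=x_{k+1,i}$ and $\mu^m=\mu_{S_k}^m$, one has
\[
-D_k \;=\; \sum_{m\in S_k}\mu^m y_m\Bigl(1-\prod_{i\in S_k\setminus\{m\}}y_i\Bigr)\;+\;\Bigl(1-\sum_{m\in S_k}\mu^m\Bigr)\prod_{i\in S_k}y_i ,
\]
and expanding ``the first $i$ with $y_i=0$'' in the first sum gives $k(k-1)$ weakenings of the triples $\{m,i,k+1\}$, while the second term requires a full copy of the certificate that $S_k$ has no source, multiplied by $\prod_i y_i$ --- another $2^k-2k$ weakenings. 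That is $2^k+k^2-3k$ terms per step (already $8$ versus the required $6$ at $k=3$), and summing over $k$ gives total coefficient size $2^n-n+\Theta(n^3)$. This proves an $O(2^n)$ upper bound but not the stated theorem, whose content is the exact value $2^n-n$. To close the gap you would need a genuinely cleverer per-step certificate (or a count below $2^k-2$ at some steps to compensate), and nothing in the proposal indicates how to obtain it; this is exactly the combinatorial work that the paper's two-copy induction is engineered to do.
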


After showing these bounds, we discuss total coefficient size for stronger proof systems. We observe that if we consider a stronger proof system which we call resolution-like proofs, our lower bound proof for the pigeonhole principle no longer works. We also observe that even though resolution is a dynamic proof system, the $O(n^3)$ size resolution proof of the ordering principle found by Gunnar St{\aa}lmark \cite{staalmarck1996short} can be captured by a one line sum of squares proof.
\section{Nullstellensatz total coefficient size}\label{preliminaries}
We start by defining total coefficient size for Nullstellensatz proofs and describing a linear program for finding the minimum total coefficient size of a Nullstellensatz proof. \begin{definition}
Given a polynomial $f$, we define the total coefficient size $T(f)$ of $f$ to be the sum of the magnitudes of the coefficients of $f$. For example, if $f(x,y,z) = 2{x^2}y - 3xyz + 5z^5$ then $T(f) = 2 + 3 + 5 = 10$.
\end{definition}
\begin{definition}
Given a system $\{p_i = 0: i \in [m]\}$ of $m$ polynomial equations, a Nullstellensatz proof of infeasibility is an equality of the form 
\[
1 = \sum_{i=1}^{m}{p_i{q_i}} 
\]
for some polynomials $\{q_i: i \in [m]\}$. We define the total coefficient size of such a Nullstellensatz proof to be $\sum_{i=1}^{m}{T(q_i)}$.
\end{definition}
The following terminology will be useful.
\begin{definition}
Given a system $\{p_i = 0: i \in [m]\}$ of polynomial equations, we call each of the equations $p_i = 0$ an axiom. For each axiom $s_i = 0$, we define a weakening of this axiom to be an equation of the form $rp_i = 0$ for some monomial $r$.
\end{definition}
\begin{remark}
We do not include the total coefficient size of $p_i$ in the total coefficient size of the proof as we want to focus on the complexity of the proof as opposed to the complexity of the axioms. That said, in this paper we only consider systems of polynomial equations where each $p_i$ is a monomial, so this choice does not matter.
\end{remark}
The minimum total coefficient size of a Nullstellensatz proof can be found using the following linear program. In general, this linear program will have infinite size, but as we discuss below, it has finite size when the variables are Boolean.
\begin{enumerate}
    \item[] Primal: Minimize $\sum_{i=1}^{m}{T(q_i)}$ subject to $\sum_{i=1}^{m}{{p_i}{q_i}} = 1$. More precisely, writing $q_i = \sum_{\text{monomials } r}{c_{ir}r}$, we minimize $\sum_{i=1}^{m}{\sum_{\text{monomials } r}{b_{ir}}}$ subject to the constraints that 
    \begin{enumerate}
        \item[1.] $b_{ir} \geq -c_{ir}$ and $b_{ir} \geq c_{ir}$ for all $i \in [m]$ and monomials $r$.
        \item[2.] $\sum_{i=1}^{m}{\sum_{\text{monomials } r}{c_{ir}{r}p_i}} = 1$
    \end{enumerate}
    \item[] Dual: Maximize $D(1)$ subject to the constraints that
    \begin{enumerate}
        \item[1.] $D$ is a linear map from polynomials to $\mathbb{R}$.
        \item[2.] For each $i \in [m]$ and each monomial $r$, $|D(rp_i)| \leq 1$.
    \end{enumerate}
\end{enumerate}
Weak duality, which is what we need for our lower bound on the pigeonhole principle, can be seen directly as follows.
\begin{proposition}
If $D$ is a linear map from polynomials to $\mathbb{R}$ such that $|D(rp_i)| \leq 1$ for all $i \in [m]$ and all monomials $r$ then any Nullstellensatz proof of infeasibility has total coefficient size at least $D(1)$.
\end{proposition}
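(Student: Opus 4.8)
The plan is to prove weak duality by the standard pairing argument: apply the functional $D$ to both sides of the Nullstellensatz identity and bound the right-hand side term by term. Concretely, suppose $1 = \sum_{i=1}^{m} p_i q_i$ is a Nullstellensatz proof, and expand each multiplier into monomials as $q_i = \sum_{\text{monomials } r} c_{ir} r$, where only finitely many $c_{ir}$ are nonzero since each $q_i$ is a polynomial. Then $p_i q_i = \sum_r c_{ir} (r p_i)$, and summing over $i$ expresses $1$ as a finite linear combination of the weakenings $r p_i$ with coefficients $c_{ir}$.

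Next I would invoke linearity of $D$ to write $D(1) = \sum_{i=1}^{m} \sum_{\text{monomials } r} c_{ir} D(r p_i)$. Applying the triangle inequality and then the hypothesis $|D(r p_i)| \leq 1$ gives
\[
D(1) \leq |D(1)| \leq \sum_{i=1}^{m} \sum_{\text{monomials } r} |c_{ir}|\,|D(r p_i)| \leq \sum_{i=1}^{m} \sum_{\text{monomials } r} |c_{ir}|.
\]
By the definition of total coefficient size, $\sum_{\text{monomials } r} |c_{ir}| = T(q_i)$, so the right-hand side is exactly $\sum_{i=1}^{m} T(q_i)$, the total coefficient size of the proof. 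This yields the claimed inequality.

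There is essentially no hard step here; the only point requiring a word of care is that all the sums involved are finite, which holds because a Nullstellensatz proof consists of genuine polynomials $q_i$ (each with finitely many monomials), so linearity of $D$ applies term by term without any convergence concerns. I would also note in passing that the argument in fact gives the slightly stronger bound $\sum_{i=1}^m T(q_i) \geq |D(1)|$, but $D(1)$ is all that is needed for the pigeonhole lower bound, so I would state it as above to match the dual linear program.
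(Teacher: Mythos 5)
Your proof is correct and is exactly the paper's argument: apply the linear functional $D$ to the identity $1 = \sum_i p_i q_i$, expand each $q_i$ into monomials, and bound $|D(rp_i)| \leq 1$ term by term via the triangle inequality. The paper compresses this into a single displayed inequality, but the underlying reasoning is identical.
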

\begin{proof}
Given a Nullstellensatz proof $1 = \sum_{i=1}^{m}{{p_i}{q_i}}$, applying $D$ to it gives
\[
D(1) = \sum_{i=1}^{m}{D({p_i}{q_i})} \leq \sum_{i=1}^{m}{T(q_i)}
\]
\end{proof}
\subsection{Special case: Boolean variables}
In this paper, we only consider problems where all of our variables are Boolean, so we make specific definitions for this case. In particular, we allow monomials to contain terms of the form $(1-x_i)$ as well as $x_i$ and we allow the Boolean axioms $x_i^2 = x_i$ to be used for free. We also observe that we can define a linear map $D$ from polynomials to $\mathbb{R}$ by assigning a value $D(x)$ to each input $x$.
\begin{definition}
Given Boolean variables $x_1,\ldots,x_N$ where we have that $x_i = 1$ if $x_i$ is true and $x_i = 0$ if $x_i$ is false, we define a monomial to be a product of the form $\left(\prod_{i \in S}{x_i}\right)\left(\prod_{j \in T}{(1 - x_j)}\right)$ for some disjoint subsets $S,T$ of $[N]$.
\end{definition}
\begin{definition}
Given a Boolean variable $x$, we use $\bar{x}$ as shorthand for the negation $1-x$ of $x$.
\end{definition}
\begin{definition}
Given a set of polynomial equations $\{p_i = 0: i \in [m]\}$ together with Boolean axioms $\{x_j^2 - x_j = 0: j \in [N]\}$, we define the total coefficient size of a Nullstellensatz proof
\[
1 = \sum_{i = 1}^{m}{{p_i}{q_i}} + \sum_{j = 1}^{N}{{g_j}(x_j^2 - x_j)}
\]
to be $\sum_{i=1}^{m}{T(q_i)}$. In other words, we allow the Boolean axioms $\{x_j^2 - x_j = 0: j \in [N]\}$ to be used for free.
\end{definition}
\begin{remark}
For the problems we consider in this paper, all of our non-Boolean axioms are monomials, so there is actually no need to use the Boolean axioms.
\end{remark}
\begin{remark}
We allow monomials to contain terms of the form $(1-x_i)$ and allow the Boolean axioms to be used for free in order to avoid spurious lower bounds coming from difficulties in manipulating the Boolean variables rather than handling the non-Boolean axioms. In particular, with these adjustments, when the non-Boolean axioms are monomials, the minimum total coefficient size of a Nullstellensatz proof is upper bounded by the minimum size of a tree-resolution proof.
\end{remark}
Since the Boolean axioms $\{x_j^2 - x_j = 0: j \in [N]\}$ can be used for free, to specify a linear map $D$ from polynomials to $\mathbb{R}$, it is necessary and sufficient to specify the value of $D$ on each input $x \in \{0,1\}^{N}$.
\begin{definition}
Given a function $D: \{0,1\}^{N} \to \mathbb{R}$, we can view $D$ as a linear map from polynomials to $\mathbb{R}$ by taking $D(f) = \sum_{x \in \{0,1\}^{N}}{f(x)D(x)}$
\end{definition}

\section{Total coefficient size lower bound for the pigeonhole principle}
In this section, we prove Theorem \ref{thm:pigeonholelowerbound}, our total coefficient size lower bound on the pigeonhole principle. We start by formally defining the pigeonhole principle.
\begin{definition}[pigeonhole principle ($\mathrm{PHP}_n$)]
	Intuitively, the pigeonhole principle says that if $n$ pigeons are assigned to $n - 1$ holes, then some hole must have more than one pigeon. Formally, for $n \ge 1$, we define $\mathrm{PHP}_n$ to be the statement that the following system of axioms is infeasible:
	\begin{itemize}
		\item For each $i \in [n]$ and $j \in [n-1]$, we have a variable $x_{i, j}$. $x_{i, j} = 1$ represents pigeon $i$ being in hole $j$, and $x_{i, j} = 0$ represents pigeon $i$ not being in hole $j$.
		\item For each $i \in [n]$, we have the axiom $\prod_{j = 1}^{n - 1}{\bar{x}_{i, j}} = 0$
		representing the constraint that each pigeon must be in at least one hole (recall that $\bar{x}_{i,j} = 1 - x_{i,j}$).
		\item For each pair of distinct pigeons $i_1, i_2 \in [n]$ and each hole $j \in [n-1]$, we have the axiom $x_{i_1, j}x_{i_2, j} = 0$
		representing the constraint that pigeons $i_1$ and $i_2$ cannot both be in hole $j$. 
	\end{itemize}
\end{definition}

We prove our lower bound on the total coefficient size complexity of $\text{PHP}_n$ by constructing and analyzing a dual solution $D$. In our dual solution, the only assignments $x$ for which $D(x) \neq 0$ are those where each pigeon goes to exactly one hole (i.e., for each pigeon $i$, exactly one of the $x_{i, j}$ is 1). Note that there are $(n - 1)^n$ such assignments. In the rest of this section, when we refer to assignments or write a summation or expectation over assignments $x$, we refer specifically to these $(n - 1)^n$ assignments.

Recall that the dual constraints are
\[
D(W) = \sum_{\text{assignments } x}{D(x)W(x)} \in [-1,1]
\]
for all weakenings $W$ of an axiom. Note that since $D(x)$ is only nonzero for assignments $x$ where each pigeon goes to exactly one hole, for any weakening $W$ of an axiom  of the form $\prod_{j = 1}^{n - 1}{\bar{x}_{i, j}} = 0$, $D(W) = 0$. Thus, it is sufficient to consider weakenings $W$ of the axioms $x_{i_1, j}x_{i_2, j} = 0$. Further note that if $|D(W)| > 1$ for some weakening $W$ then we can rescale $D$ by dividing by $\max_{W}{|D(W)|}$. Thus, we can rewrite the objective value of the dual program as $\frac{D(1)}{\max_{W}{|D(W)|}}$. Letting $\E$ denote the expectation over a uniform assignment where each pigeon goes to exactly one hole, $\frac{D(1)}{\max_{W}{|D(W)|}} = \frac{\E(D)}{\max_{W}{|\E(DW)|}}$ so it is sufficient to construct $D$ and analyze $\E(D)$ and $\max_{W}{|\E(DW)|}$.

Before constructing and analyzing $D$, we provide some intuition for our construction. The idea is that, if we consider a subset of $n - 1$ pigeons, then $D$ should behave like the indicator function for whether those $n - 1$ pigeons all go to different holes. More concretely, for any polynomial $p$ which does not depend on some pigeon $i$ (i.e. $p$ does not contain $x_{i,j}$ or $\bar{x}_{i,j}$ for any $j \in [n-1]$), 
\[
\E(Dp) = \frac{(n-1)!}{(n-1)^{n-1}}\E(p \mid \text{all pigeons in } [n] \setminus \{i\} \text{ go to different holes})
\]

Given this intuition, we now present our construction. Our dual solution $D$ will be a linear combination of the following functions:

\begin{definition}[functions $J_S$]\label{J}
	Let $S \subsetneq [n]$ be a subset of pigeons of size at most $n - 1$. We define the function $J_S$ that maps assignments to $\{0, 1\}$. For an assignment $x$, $J_S(x) = 1$ if all pigeons in $S$ are in different holes according to $x$, and $J_S(x) = 0$ otherwise. $\qed$
\end{definition}
Note that if $|S| = 0$ or $|S| = 1$, then $J_S$ is the constant function 1. In general, the expectation of $J_S$ over a uniform assignment is $\E(J_S) = \left(\prod_{k = 1}^{|S|} (n - k)\right) / (n - 1)^{|S|}$.\\

\begin{definition}[dual solution $D$]\label{D}
	Our dual solution $D$ is:
	\begin{equation*}
		D = \sum_{S \subsetneq [n]} c_SJ_S,
	\end{equation*}
	where the coefficients $c_S$ are $c_S = \frac{(-1)^{n - 1 - |S|} (n - 1 - |S|)!}{(n - 1)^{n - 1 - |S|}}$.
\end{definition}

We will lower-bound the dual value $\E(D) / \max_W |\E(DW)|$ by computing $\E(D)$ and then upper-bounding $\max_W |\E(DW)|$. In both calculations, we will use the following key property of $D$, which we introduced in our intuition for the construction:

\begin{lemma}\label{dual-intuition}
If $p$ is a polynomial which does not depend on pigeon $i$ (i.e. $p$ does not contain any variables of the form $x_{i,j}$ or $\bar{x}_{i, j}$) then
$\E(Dp) = \E(J_{[n] \setminus \{i\}}p)$.
\end{lemma}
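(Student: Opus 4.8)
The strategy is to expand $\E(Dp) = \sum_{S \subsetneq [n]} c_S \E(J_S p)$ and show that the contributions of all sets $S$ other than $S = [n]\setminus\{i\}$ cancel, or more precisely, that they collapse down to $\E(J_{[n]\setminus\{i\}} p)$. The key point is that since $p$ doesn't depend on pigeon $i$, we should group the sets $S \subsetneq [n]$ according to whether or not they contain $i$, and according to which subset $S' \subseteq [n]\setminus\{i\}$ they "induce." So I would write each $S$ as either $S'$ or $S' \cup \{i\}$ for $S' \subseteq [n]\setminus\{i\}$, being careful that $S' \cup \{i\}$ is only allowed when $|S'| \le n-2$ (so that $|S| \le n-1$).

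The main computation is to understand $\E(J_{S'} p)$ and $\E(J_{S'\cup\{i\}} p)$ in terms of $S'$. Condition on the positions of the pigeons in $S'$ (which is all that $J_{S'}$ and $p$ see, given independence of pigeon $i$). When the pigeons in $S'$ land in distinct holes — an event I'll call $E_{S'}$, with the rest of the assignment on $[n]\setminus S'$ still uniform — we have $J_{S'} = 1$, and $J_{S'\cup\{i\}} = 1$ exactly when pigeon $i$ additionally avoids the $|S'|$ occupied holes, which happens with probability $(n-1-|S'|)/(n-1)$ independently of $p$. Hence $\E(J_{S'\cup\{i\}} p) = \frac{n-1-|S'|}{n-1}\E(J_{S'} p)$. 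Plugging in the coefficient formula, the combined coefficient of $\E(J_{S'} p)$ coming from both $S = S'$ and $S = S'\cup\{i\}$ is $c_{S'} + \frac{n-1-|S'|}{n-1} c_{S'\cup\{i\}}$; with $c_{S'} = \frac{(-1)^{n-1-|S'|}(n-1-|S'|)!}{(n-1)^{n-1-|S'|}}$ and $c_{S'\cup\{i\}} = \frac{(-1)^{n-2-|S'|}(n-2-|S'|)!}{(n-1)^{n-2-|S'|}}$, these two terms are negatives of each other and cancel — provided $|S'| \le n-2$. The one exceptional case is $S' = [n]\setminus\{i\}$ itself, $|S'| = n-1$, where $S'\cup\{i\} = [n]$ is not allowed, so only the $c_{S'} = 1$ term survives, leaving exactly $\E(J_{[n]\setminus\{i\}} p)$.

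I would present this as: (1) reindex the sum by $S' \subseteq [n]\setminus\{i\}$; (2) establish the identity $\E(J_{S'\cup\{i\}} p) = \frac{n-1-|S'|}{n-1}\E(J_{S'} p)$ for $|S'| \le n-2$ via the conditioning argument above, using that $p$ is a function of pigeons in $[n]\setminus\{i\}$ only; (3) verify the algebraic cancellation $c_{S'} + \frac{n-1-|S'|}{n-1}c_{S'\cup\{i\}} = 0$; (4) collect the surviving $S' = [n]\setminus\{i\}$ term.

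The step I expect to be the main obstacle is step (2), the conditioning identity — specifically being careful about exactly what "does not depend on pigeon $i$" buys us and making the independence-of-pigeon-$i$ argument rigorous in the expectation over the $(n-1)^n$ assignments. One has to argue that conditioned on the (distinct) holes of the $S'$-pigeons, the hole of pigeon $i$ is uniform on all $n-1$ holes and independent of everything $p$ sees, so the factor $(n-1-|S'|)/(n-1)$ pulls out cleanly. Everything else is bookkeeping and the coefficient algebra, which is routine once the formula for $c_S$ is plugged in.
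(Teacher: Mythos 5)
Your proposal is correct and follows essentially the same route as the paper's proof: pair each $T \subseteq [n]\setminus\{i\}$ with $T \cup \{i\}$, use the conditioning identity $\E(J_{T\cup\{i\}}p) = \frac{n-1-|T|}{n-1}\E(J_T p)$ (which holds because $p$ and $J_T$ do not see pigeon $i$, whose hole is uniform and independent), and observe that the coefficient relation $c_{T\cup\{i\}} = -\frac{n-1}{n-1-|T|}c_T$ makes every pair cancel, leaving only the $T = [n]\setminus\{i\}$ term with $c_{[n]\setminus\{i\}} = 1$. The step you flag as the main obstacle is handled in the paper by exactly the independence argument you sketch, so there is nothing further to fix.
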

\begin{proof}
Without loss of generality, suppose $p$ does not contain any variables of the form $x_{1,j}$ or $\bar{x}_{1, j}$. Let $T$ be any subset of pigeons that does not contain pigeon 1 and that has size at most $n - 2$. Observe that 
\[
\E({J_{T \cup \{1\}}}p) = \frac{n - 1 - |T|}{n-1}\E({J_{T}}p)
\]
because regardless of the locations of the pigeons in $T$, the probability that pigeon $1$ goes to a different hole is $\frac{n - 1 - |T|}{n-1}$ and $p$ does not depend on the location of pigeon $1$. Since 
\begin{align*}
c_{T \cup \{1\}} &= \frac{(-1)^{n - 2 - |T|} (n - 2 - |T|)!}{(n - 1)^{n - 2 - |T|}} \\
&= -\frac{n-1}{n-1-|T|} \cdot \frac{(-1)^{n - 1 - |T|} (n - 1 - |T|)!}{(n - 1)^{n - 1 - |T|}} = -\frac{n-1}{n-1-|T|}c_{T}
\end{align*}
we have that for all $T \subsetneq \{2, \dots, n\}$,
\[
\E(c_{T \cup \{1\}}{J_{T \cup \{1\}}}p) + \E(c_{T}{J_{T}}p) = 0
\]
Thus, all terms except for $J_{\{2,3,\ldots,n\}}$ cancel. Since $c_{\{2,3,\ldots,n\}} = 1$, we have that $\E(Dp) = \E(J_{\{2,3,\ldots,n\}}p)$, as needed.
\end{proof}

The value of $\E(D)$ follows immediately:

\begin{corollary}\label{exp-d}
	\begin{equation*}
		\E(D) = \frac{(n - 2)!}{(n - 1)^{n - 2}}.
	\end{equation*}
\end{corollary}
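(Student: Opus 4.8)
The plan is to obtain Corollary \ref{exp-d} as the special case $p = 1$ of Lemma \ref{dual-intuition}. The constant polynomial $1$ trivially contains no variable of the form $x_{i,j}$ or $\bar{x}_{i,j}$, so it satisfies the hypothesis of the lemma for every pigeon $i$; fixing (say) $i = n$, the lemma gives $\E(D) = \E(D \cdot 1) = \E(J_{[n]\setminus\{n\}} \cdot 1) = \E(J_{[n-1]})$.

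It then remains only to evaluate $\E(J_{[n-1]})$ using the formula recorded just after Definition \ref{J}, namely $\E(J_S) = \left(\prod_{k=1}^{|S|}(n-k)\right)/(n-1)^{|S|}$. Taking $S = [n-1]$, so that $|S| = n-1$, we get $\prod_{k=1}^{n-1}(n-k) = (n-1)!$ and hence $\E(J_{[n-1]}) = (n-1)!/(n-1)^{n-1}$. Cancelling one factor of $n-1$ from the numerator and denominator rewrites this as $(n-2)!/(n-1)^{n-2}$, which is exactly the claimed value.

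There is essentially no obstacle: all the work lives in Lemma \ref{dual-intuition}, and the corollary is the instantiation $p = 1$ together with a one-line arithmetic rearrangement. The only points worth a moment's care are checking that the constant polynomial is a legitimate choice of $p$ (it is, vacuously) and verifying the exponent bookkeeping in passing from $(n-1)!/(n-1)^{n-1}$ to $(n-2)!/(n-1)^{n-2}$.
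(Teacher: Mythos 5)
Your proposal is correct and follows the paper's own proof exactly: instantiate Lemma \ref{dual-intuition} with $p = 1$ and evaluate $\E(J_S)$ for a set of $n-1$ pigeons (your choice of pigeon $n$ versus the paper's pigeon $1$ is immaterial by symmetry). The arithmetic from $(n-1)!/(n-1)^{n-1}$ to $(n-2)!/(n-1)^{n-2}$ is also right.
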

\begin{proof} Let $p = 1$. By Lemma \ref{dual-intuition}, $\E(D) = \E(J_{\{2, \dots, n\}}) = (n - 2)!/(n - 1)^{n - 2}$.
\end{proof}

\subsection{Upper bound on $\max_W |\E(DW)|$}

We introduce the following notation:

\begin{definition}[$H_{W, i}$]
Given a weakening $W$, we define a set of holes $H_{W, i} \subseteq [n-1]$ for each pigeon $i \in [n]$ so that $W(x) = 1$ if and only if each pigeon $i \in [n]$ is mapped to one of the holes in $H_{W, i}$. More precisely,
\begin{itemize}
	\item If $W$ contains terms $x_{i, j_1}$ and $x_{i, j_2}$ for distinct holes $j_1, j_2$, then $H_{W, i} = \emptyset$ (i.e. it is impossible that $W(x) = 1$ because pigeon $i$ cannot go to both holes $h$ and $h'$). Similarly, if $W$ contains both $x_{i,j}$ and $\bar{x}_{i,j}$ for some $j$ then $H_{W, i} = \emptyset$ (i.e. it is impossible for pigeon $i$ to both be in hole $j$ and not be in hole $j$).
	\item If $W$ contains exactly one term of the form $x_{i, j}$, then $H_{W, i} = \{j\}$. (i.e., for all $x$ such that $W(x) = 1$, pigeon $i$ goes to hole $j$).
	\item If $W$ contains no terms of the form $x_{i, j}$, then $H_{W, i}$ is the subset of holes $j$ such that $W$ does \textit{not} contain the term $\bar{x}_{i, j}$. (i.e., if $W$ contains the term $\bar{x}_{i, j}$, then for all $x$ such that $W(x) = 1$, pigeon $i$ does not go to hole $j$.)
\end{itemize}
\end{definition}


The key property we will use to bound $\max_W |\E(DW)|$ follows immediately from Lemma \ref{dual-intuition}:

\begin{lemma}\label{exp-dw}
	Let $W$ be a weakening. If there exists some pigeon $i \in [n]$ such that $H_{W, i} = [n-1]$ (i.e., $W$ does not contain any terms of the form $x_{i, j}$ or $\bar{x}_{i, j}$), then $\E(DW) = 0$.
\end{lemma}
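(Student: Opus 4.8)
The plan is to reduce this statement directly to Lemma \ref{dual-intuition}. Suppose $W$ is a weakening for which there exists a pigeon $i \in [n]$ with $H_{W,i} = [n-1]$. By the definition of $H_{W,i}$, this means $W$ contains no term of the form $x_{i,j}$ (otherwise $H_{W,i}$ would be a single hole or empty) and no term of the form $\bar{x}_{i,j}$ (otherwise hole $j$ would be excluded from $H_{W,i}$). In other words, $W$, viewed as a polynomial, does not depend on pigeon $i$. Hence Lemma \ref{dual-intuition} applies with $p = W$.

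Next I would apply Lemma \ref{dual-intuition} to get $\E(DW) = \E(J_{[n]\setminus\{i\}} W)$. The remaining task is to argue this expectation is zero. Here I use the fact that $W$ is a weakening of one of the non-Boolean axioms; since $D(x)$ is supported only on assignments where each pigeon goes to exactly one hole, the only relevant case is a weakening of an axiom $x_{i_1,j}x_{i_2,j} = 0$ for distinct pigeons $i_1, i_2$. In particular $W$ contains the factor $x_{i_1,j}x_{i_2,j}$, so $W(x) = 1$ forces pigeons $i_1$ and $i_2$ both into hole $j$. But $i \notin \{i_1, i_2\}$ (as $W$ does not depend on pigeon $i$, whereas it does depend on $i_1$ and $i_2$), so both $i_1$ and $i_2$ lie in $[n]\setminus\{i\}$, and the event $J_{[n]\setminus\{i\}}(x) = 1$ requires all pigeons in $[n]\setminus\{i\}$ to occupy distinct holes. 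These two conditions are contradictory: $J_{[n]\setminus\{i\}}(x) W(x) = 0$ for every assignment $x$, so $\E(J_{[n]\setminus\{i\}}W) = 0$, and therefore $\E(DW) = 0$.

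I expect the only subtlety — and it is minor — to be making the first step airtight: confirming that $H_{W,i} = [n-1]$ is genuinely equivalent to "$W$ contains no occurrence of $x_{i,j}$ or $\bar{x}_{i,j}$," which is immediate from a careful reading of the three bullet points in the definition of $H_{W,i}$ (the first bullet handles the degenerate cases, the second forces $|H_{W,i}| = 1$, and the third removes exactly the holes $j$ for which $\bar{x}_{i,j}$ appears). Once that equivalence is in hand, the rest is a one-line invocation of Lemma \ref{dual-intuition} followed by the observation that $J_{[n]\setminus\{i\}}$ and a weakening of a "two pigeons, one hole" axiom cannot simultaneously evaluate to $1$. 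No real obstacle here; the lemma is essentially a corollary of Lemma \ref{dual-intuition} combined with the support structure of $D$.
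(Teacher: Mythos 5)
Your proof is correct and follows essentially the same route as the paper's: apply Lemma \ref{dual-intuition} with $p = W$ to get $\E(DW) = \E(J_{[n]\setminus\{i\}}W)$, then observe that $J_{[n]\setminus\{i\}}$ and a weakening of $x_{i_1,j}x_{i_2,j}=0$ with $i_1,i_2 \neq i$ cannot both equal $1$. The paper merely phrases this with a without-loss-of-generality choice of concrete indices, and your extra remarks (disposing of weakenings of the non-emptiness axioms via the support of $D$, and unpacking the definition of $H_{W,i}$) are accurate but not needed beyond what the paper already establishes.
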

\begin{proof}
Without loss of generality, suppose $W$ is a weakening of the axiom $x_{2, 1}x_{3, 1} = 0$ and $H_{W, 1} = [n]$. By Lemma \ref{dual-intuition}, $\E(DW) = \E(J_{\{2, \dots, n\}}W)$. However,  $\E(J_{\{2, \dots, n\}}W) = 0$ because if $W = 1$ then pigeons 2 and 3 must both go to hole 1.
\end{proof}

We make the following definition and then state a corollary of Lemma \ref{exp-dw}.
\begin{definition}[$W^{\mathrm{flip}}_S$]
	Let $W$ be a weakening of the axiom $x_{i_1, j}x_{i_2, j} = 0$ for pigeons $i_1, i_2$ and hole $j$. Let $S \subseteq [n] \setminus \{i_1, i_2\}$. We define $W^{\mathrm{flip}}_S$, which is also a weakening of the axiom $x_{i_1, j}x_{i_2, j} = 0$, as follows.
	\begin{itemize}
		\item For each pigeon $i_3 \in S$, we define $W^{\mathrm{flip}}_S$ so that $H_{W^{\mathrm{flip}}_S, i_3} = [n-1] \setminus H_{W, i_3}$.
		\item For each pigeon $i_3 \notin S$, we define $W^{\mathrm{flip}}_S$ so that $H_{W^{\mathrm{flip}}_S, i_3} = H_{W, i_3}$.
	\end{itemize}
	(Technically, there may be multiple ways to define $W^{\mathrm{flip}}_S$ to satisfy these properties; we can arbitrarily choose any such definition.) $\qed$
\end{definition}

In other words, $W^{\mathrm{flip}}_S$ is obtained from $W$ by flipping the sets of holes that the pigeons in $S$ can go to in order to make the weakening evaluate to 1. Now we state a corollary of Lemma \ref{exp-dw}:\\

\begin{corollary}\label{flip}
	Let $W$ be a weakening of the axiom $x_{i_1, j}x_{i_2, j} = 0$ for pigeons $i_1, i_2$ and hole $j$. Let $S \subseteq [n] \setminus \{i_1, i_2\}$. Then
	\begin{equation*}
		\E\left(DW^{\mathrm{flip}}_S\right) = (-1)^{|S|} \cdot \E(DW).
	\end{equation*}
\end{corollary}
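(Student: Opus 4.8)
The plan is to reduce the statement to the case $|S| = 1$ and then settle that case by a direct cancellation argument built on Lemma \ref{exp-dw}.

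For the reduction I would induct on $|S|$. The base case $|S| = 0$ is immediate since $W^{\mathrm{flip}}_\emptyset = W$. For the inductive step, fix any pigeon $i_3 \in S$ and put $S' = S \setminus \{i_3\}$. By the definition of $W^{\mathrm{flip}}$ in terms of the hole-sets $H_{\cdot, i}$, the weakenings $(W^{\mathrm{flip}}_{S'})^{\mathrm{flip}}_{\{i_3\}}$ and $W^{\mathrm{flip}}_S$ have the same hole-set $H_{\cdot, i}$ for every pigeon $i$, hence agree as functions on the $(n-1)^n$ relevant assignments; since $D$ is supported on those assignments, $\E\bigl(D(W^{\mathrm{flip}}_{S'})^{\mathrm{flip}}_{\{i_3\}}\bigr) = \E\bigl(DW^{\mathrm{flip}}_S\bigr)$. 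Applying the $|S| = 1$ case to the weakening $W^{\mathrm{flip}}_{S'}$ (which is again a weakening of $x_{i_1,j}x_{i_2,j} = 0$ by the definition of $W^{\mathrm{flip}}$) and the pigeon $i_3$, and combining with the inductive hypothesis $\E\bigl(DW^{\mathrm{flip}}_{S'}\bigr) = (-1)^{|S'|}\E(DW)$, then gives the claimed identity for $S$.

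It thus remains to show $\E\bigl(DW^{\mathrm{flip}}_{\{i_3\}}\bigr) = -\E(DW)$ for a single pigeon $i_3 \notin \{i_1,i_2\}$, i.e. $\E(DW) + \E\bigl(DW^{\mathrm{flip}}_{\{i_3\}}\bigr) = 0$. The key point is that, on the relevant assignments (every pigeon in exactly one hole), $W(x) = \prod_{i \in [n]} \1[\text{pigeon } i \text{ goes to a hole in } H_{W,i}]$, and the flip replaces only the $i_3$-factor by the complementary indicator $\1[\text{pigeon } i_3 \text{ goes to a hole in } [n-1]\setminus H_{W,i_3}]$. Since pigeon $i_3$ occupies exactly one hole, these two indicators sum to $1$, so $W(x) + W^{\mathrm{flip}}_{\{i_3\}}(x) = \prod_{i \neq i_3}\1[\text{pigeon } i \text{ goes to a hole in } H_{W,i}]$ for every relevant $x$. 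The right-hand side is the restriction to the relevant assignments of a weakening $W''$ of $x_{i_1,j}x_{i_2,j} = 0$ — namely $W$ with all $i_3$-literals deleted, which is a legitimate weakening precisely because $i_3 \notin \{i_1,i_2\}$ so the literals $x_{i_1,j}, x_{i_2,j}$ are untouched — and $W''$ satisfies $H_{W'', i_3} = [n-1]$. By Lemma \ref{exp-dw} we get $\E(DW'') = 0$, and since $D$ is supported on the relevant assignments, $\E(DW) + \E\bigl(DW^{\mathrm{flip}}_{\{i_3\}}\bigr) = \E(DW'') = 0$.

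I expect the only thing requiring care to be the bookkeeping between weakenings as formal monomials and as functions on the $(n-1)^n$ relevant assignments: $W^{\mathrm{flip}}_S$ is only specified up to this equivalence, and both the additive identity $W + W^{\mathrm{flip}}_{\{i_3\}} = W''$ and the compatibility of iterated single-pigeon flips hold only at that level. But since $\E(D\,\cdot\,)$ depends only on values on the relevant assignments, this causes no loss, and I do not anticipate any genuine obstacle beyond making this identification explicit.
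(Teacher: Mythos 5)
Your proof is correct and follows essentially the same route as the paper's: reduce to a single flipped pigeon $i_3$ and observe that $W + W^{\mathrm{flip}}_{\{i_3\}}$ coincides (on the support of $D$) with a weakening having $H_{\cdot,i_3}=[n-1]$, so Lemma \ref{exp-dw} gives cancellation. Your extra care about iterated flips and the distinction between weakenings as monomials versus as functions on the relevant assignments only makes explicit what the paper leaves implicit.
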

\begin{proof} 
It suffices to show that for $i_3 \in [n] \setminus \{i_1, i_2\}$, we have $\E\left(DW^{\mathrm{flip}}_{\{i_3\}}\right) = -\E(DW)$. Indeed, $W + W^{\mathrm{flip}}_{\{i_3\}}$ is a weakening satisfying $H_{W + W^{\mathrm{flip}}_{\{i_3\}}, i_3} = [n-1]$. Therefore, by Lemma \ref{exp-dw}, $\E\left(D\left(W + W^{\mathrm{flip}}_{\{i_3\}}\right)\right) = 0$.
\end{proof}

Using Corollary \ref{flip}, we can bound $\max_W |\E(DW)|$ using Cauchy-Schwarz. We first show an approach that does not give a strong enough bound. We then show how to modify the approach to achieve a better bound.

\subsubsection{Unsuccessful approach to upper bound $\max_W |\E(DW)|$}\label{unsuccessful}

Consider $\max_W |\E(DW)|$. By Lemma \ref{flip}, it suffices to consider only weakenings $W$ such that, if $W$ is a weakening of the axiom $x_{i_1, j}x_{i_2, j} = 0$, then for all pigeons $i_3 \in [n] \setminus \{i_1, i_2\}$, we have $|H_{W, k}| \leq \lfloor (n - 1) / 2 \rfloor$. For any such $W$, we have
\begin{align*}
	\lVert W \rVert &= \sqrt{E(W^2)}\\
	&\le \sqrt{\left(\frac{1}{n - 1}\right)^2\left(\frac{1}{2}\right)^{n - 2}}\\
	&= (n - 1)^{-1} \cdot 2^{-(n - 2)/2}.
\end{align*}
By Cauchy-Schwarz,
\begin{align*}
	|\E(DW)| &\le \lVert D \rVert \lVert W \rVert\\
	&\le \lVert D \rVert (n - 1)^{-1}2^{-(n - 2)/2}.
\end{align*}
Using the value of $\E(D)$ from Corollary \ref{exp-d}, the dual value $\E(D) / \max_W |\E(DW)|$ is at least:
\[
	\frac{(n - 2)!}{(n - 1)^{n - 2}} \cdot \frac{(n - 1)2^{(n - 2)/2}}{\lVert D \rVert} = \widetilde{\Theta}\left(\left(\frac{e}{\sqrt{2}}\right)^{-n} \cdot \frac{1}{\lVert D \rVert}\right)
\]
by Stirling's formula. Thus, in order to achieve an exponential lower bound on the dual value, we would need $1 / \lVert D \rVert \ge \Omega(c^n)$ for some $c > e/\sqrt{2}$. However, this requirement is too strong, as we will show that $1 / \lVert D \rVert = \widetilde{\Theta}\left(\left(\sqrt{e}\right)^n\right)$. Directly applying Cauchy-Schwarz results in too loose of a bound on $\max_W |\E(DW)|$, so we now modify our approach.

\subsubsection{Successful approach to upper bound $\max_W |\E(DW)|$}

\begin{definition}[$W^{\{-1, 0, 1\}}$]
	Let $W$ be a weakening of the axiom $x_{i_1, j}x_{i_2, j} = 0$ for pigeons $i_1, i_2$ and hole $j$. We define the function $W^{\{-1, 0, 1\}}$ that maps assignments to $\{-1, 0, 1\}$. For an assignment $x$,
	\begin{itemize}
		\item If pigeons $i_1$ and $i_2$ do not both go to hole $j$, then $W^{\{-1, 0, 1\}}(x) = 0$.
		\item Otherwise, let $V(x) = |\{i_3 \in [n] \setminus \{i_1, i_2\} : \text{pigeon } i_3 \text{ does not go to } H_{W, i_3}\}|$. Then $W^{\{-1, 0, 1\}}(x) = (-1)^{V(x)}$.
	\end{itemize}
\end{definition}

Note that $W^{\{-1, 0, 1\}}$ is a linear combination of the $W^{\mathrm{flip}}_S$:\\

\begin{lemma}\label{exp-dw-plus-minus}
	Let $W$ be a weakening of the axiom $x_{i_1, j}x_{i_2, j} = 0$ for pigeons $i_1, i_2$ and hole $j$. We have:
	\begin{equation*}
		W^{\{-1, 0, 1\}} = \sum_{S \subseteq [n] \setminus \{i_1, i_2\}} (-1)^{|S|} \cdot W^{\mathrm{flip}}_S.
	\end{equation*}
	It follows that:
	\begin{equation*}
		\E\left(DW^{\{-1, 0, 1\}}\right) = 2^{n - 2} \cdot \E(DW).
	\end{equation*}
\end{lemma}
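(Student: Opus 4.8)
## Proof proposal for Lemma \ref{exp-dw-plus-minus}

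The plan is to prove the claimed identity $W^{\{-1,0,1\}} = \sum_{S \subseteq [n]\setminus\{i_1,i_2\}} (-1)^{|S|} W^{\mathrm{flip}}_S$ by evaluating both sides pointwise on an arbitrary assignment $x$, and then to deduce the expectation identity by linearity of $D$ together with Corollary \ref{flip}.

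First I would dispose of the case where pigeons $i_1$ and $i_2$ do not both go to hole $j$. In that case $W^{\{-1,0,1\}}(x) = 0$ by definition; on the other hand, each $W^{\mathrm{flip}}_S$ is also a weakening of the axiom $x_{i_1,j}x_{i_2,j} = 0$, so it retains the literals $x_{i_1,j}$ and $x_{i_2,j}$, hence $W^{\mathrm{flip}}_S(x) = 0$ for every $S$, and the right-hand side vanishes too. Now assume pigeons $i_1$ and $i_2$ both go to hole $j$, and set $U(x) = \{i_3 \in [n]\setminus\{i_1,i_2\} : \text{pigeon } i_3 \text{ does not go to a hole in } H_{W,i_3}\}$, so $V(x) = |U(x)|$ and $W^{\{-1,0,1\}}(x) = (-1)^{V(x)}$. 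The key observation is that for a fixed $x$, the weakening $W^{\mathrm{flip}}_S$ evaluates to $1$ on $x$ precisely when, for every pigeon $i_3 \notin \{i_1,i_2\}$, pigeon $i_3$ lands in $H_{W^{\mathrm{flip}}_S, i_3}$; since $H_{W^{\mathrm{flip}}_S, i_3}$ equals $H_{W,i_3}$ when $i_3 \notin S$ and its complement $[n-1]\setminus H_{W,i_3}$ when $i_3 \in S$, this happens iff $S \supseteq U(x)$ and $S \cap U(x)^c = \emptyset$ among the relevant coordinates — more carefully, $W^{\mathrm{flip}}_S(x) = 1$ iff $S$ agrees with $U(x)$ on which pigeons need their hole-set flipped, i.e. iff $U(x) \subseteq S$ and no pigeon outside $U(x)$ lies in $S$ in a way that breaks membership. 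I should double-check this: pigeon $i_3 \in U(x)$ fails $H_{W,i_3}$ but succeeds in $[n-1]\setminus H_{W,i_3}$, so we need $i_3 \in S$; pigeon $i_3 \notin U(x)$ succeeds in $H_{W,i_3}$, so we need $i_3 \notin S$ — hence $W^{\mathrm{flip}}_S(x) = 1$ iff $S = U(x)$ exactly. Therefore $\sum_S (-1)^{|S|} W^{\mathrm{flip}}_S(x) = (-1)^{|U(x)|} = (-1)^{V(x)} = W^{\{-1,0,1\}}(x)$, which establishes the identity.

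The expectation identity then follows immediately: applying $\E(D\,\cdot\,)$ to both sides and using linearity,
\[
\E\!\left(DW^{\{-1,0,1\}}\right) = \sum_{S \subseteq [n]\setminus\{i_1,i_2\}} (-1)^{|S|}\, \E\!\left(DW^{\mathrm{flip}}_S\right) = \sum_{S \subseteq [n]\setminus\{i_1,i_2\}} (-1)^{|S|}(-1)^{|S|}\,\E(DW) = 2^{n-2}\,\E(DW),
\]
where the middle equality is Corollary \ref{flip} and the last uses $(-1)^{2|S|} = 1$ together with $|[n]\setminus\{i_1,i_2\}| = n-2$, so there are $2^{n-2}$ subsets $S$.

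I expect the main (though modest) obstacle to be the pointwise bookkeeping in the second paragraph: one has to be careful that "flipping $H_{W,i_3}$" interacts correctly with the three-case definition of $H_{W,i_3}$ (empty set, singleton, or co-complement of the excluded holes), and that the technical ambiguity in the definition of $W^{\mathrm{flip}}_S$ — different weakenings can realize the same hole-sets $H_{W^{\mathrm{flip}}_S, i_3}$ — does not matter, since $W^{\mathrm{flip}}_S(x)$ depends only on the sets $H_{W^{\mathrm{flip}}_S, i_3}$ and hence any valid choice gives the same function. Everything else is a one-line application of linearity and Corollary \ref{flip}.
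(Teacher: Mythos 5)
Your proposal is correct and follows essentially the same route as the paper's proof: a pointwise case analysis showing that exactly one $W^{\mathrm{flip}}_S$ (namely $S = U(x)$) evaluates to $1$ when pigeons $i_1, i_2$ both occupy hole $j$, followed by linearity and Corollary \ref{flip} for the expectation identity. Your extra bookkeeping about why $W^{\mathrm{flip}}_S(x) = 1$ iff $S = U(x)$, and why the ambiguity in defining $W^{\mathrm{flip}}_S$ is harmless, just makes explicit what the paper leaves implicit.
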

\begin{proof}To prove the first equation, consider any assignment $x$. If pigeons $i_1$ and $i_2$ do not both go to hole $j$, then both $W^{\{-1, 0, 1\}}$ and all the $W^{\mathrm{flip}}_S$ evaluate to 0 on $x$. Otherwise, exactly one of the $W^{\mathrm{flip}}_S(x)$ equals 1, and for this choice of $S$, we have $W^{\{-1, 0, 1\}}(x) = (-1)^{|S|}$.

The second equation follows because:
\begin{align*}
	\E\left(DW^{\{-1, 0, 1\}}\right) &= \sum_{S \subseteq [n] \setminus \{i_1, i_2\}} (-1)^{|S|} \cdot \E\left(DW^{\mathrm{flip}}_S\right)\\
	&= \sum_{S \subseteq [n] \setminus \{i_1, i_2\}} (-1)^{|S|}(-1)^{|S|} \cdot \E(DW) \tag{Corollary \ref{flip}}\\
	&= 2^{n - 2} \cdot \E(DW).
\end{align*}
\end{proof}

Using Lemma \ref{exp-dw-plus-minus}, we now improve on the approach to upper-bound $\max_W |\E(DW)|$ from section \ref{unsuccessful}:

\begin{lemma}\label{exp-DW-successful}
	The dual value $\E(D) / \max_W |\E(DW)|$ is at least $\frac{(n - 2)!}{(n - 1)^{n - 2}} \cdot \frac{(n - 1)2^{n - 2}}{\lVert D \rVert}$
\end{lemma}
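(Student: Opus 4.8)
The plan is to bound $\max_W |\E(DW)|$ by applying Cauchy--Schwarz to $\E\bigl(D W^{\{-1,0,1\}}\bigr)$ rather than to $\E(DW)$ directly, and then use Lemma \ref{exp-dw-plus-minus} to transfer the bound back to $\E(DW)$. Concretely, fix a weakening $W$ of an axiom $x_{i_1,j}x_{i_2,j}=0$. By Lemma \ref{exp-dw-plus-minus} we have $\E(DW) = 2^{-(n-2)}\,\E\bigl(DW^{\{-1,0,1\}}\bigr)$, so
\[
|\E(DW)| \;=\; 2^{-(n-2)}\,\bigl|\E\bigl(DW^{\{-1,0,1\}}\bigr)\bigr| \;\le\; 2^{-(n-2)}\,\lVert D\rVert\,\bigl\lVert W^{\{-1,0,1\}}\bigr\rVert
\]
by Cauchy--Schwarz. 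So the entire task reduces to bounding $\lVert W^{\{-1,0,1\}}\rVert = \sqrt{\E\bigl((W^{\{-1,0,1\}})^2\bigr)}$. But $W^{\{-1,0,1\}}$ takes values in $\{-1,0,1\}$, and it is nonzero exactly on the assignments where pigeons $i_1$ and $i_2$ both go to hole $j$. Under the uniform measure over assignments (each of $n$ pigeons independently to one of $n-1$ holes), that event has probability $(n-1)^{-2}$, so $\E\bigl((W^{\{-1,0,1\}})^2\bigr) = (n-1)^{-2}$ and hence $\lVert W^{\{-1,0,1\}}\rVert = (n-1)^{-1}$.

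Plugging this in gives $|\E(DW)| \le 2^{-(n-2)} (n-1)^{-1} \lVert D\rVert$ for every $W$, hence $\max_W |\E(DW)| \le \lVert D\rVert / \bigl((n-1)2^{n-2}\bigr)$. Combining with the value $\E(D) = (n-2)!/(n-1)^{n-2}$ from Corollary \ref{exp-d}, the dual value is
\[
\frac{\E(D)}{\max_W |\E(DW)|} \;\ge\; \frac{(n-2)!}{(n-1)^{n-2}} \cdot \frac{(n-1)2^{n-2}}{\lVert D\rVert},
\]
which is exactly the claimed bound. Note this improves on the unsuccessful approach by a factor of $2^{(n-2)/2}$: there $\lVert W\rVert$ was bounded by roughly $(n-1)^{-1}2^{-(n-2)/2}$, whereas here the extra factor of $2^{n-2}$ from Lemma \ref{exp-dw-plus-minus} sits in the denominator of $\max_W|\E(DW)|$ while $\lVert W^{\{-1,0,1\}}\rVert$ only lost a factor $2^{(n-2)/2}$ relative to $\lVert W\rVert$.

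I do not expect any real obstacle here — the work was already done in setting up $W^{\{-1,0,1\}}$ and proving Lemma \ref{exp-dw-plus-minus}. The only point requiring a moment's care is the computation of $\E\bigl((W^{\{-1,0,1\}})^2\bigr)$: one must observe that $(W^{\{-1,0,1\}})^2$ is simply the indicator of the event "$i_1$ and $i_2$ both in hole $j$" (since the $\pm1$ values square to $1$), and that this event is independent of where the other pigeons go, so its probability is exactly $1/(n-1)^2$ regardless of the structure of $W$ on the remaining pigeons. The remaining work — bounding $\lVert D\rVert$ itself, and then invoking Stirling to get the $\Omega(n^{3/4}(2/\sqrt{e})^n)$ bound of Theorem \ref{thm:pigeonholelowerbound} — is deferred to the subsequent sections.
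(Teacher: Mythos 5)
Your proposal is correct and follows essentially the same route as the paper: apply Lemma \ref{exp-dw-plus-minus} to replace $\E(DW)$ by $2^{-(n-2)}\E(DW^{\{-1,0,1\}})$, use Cauchy--Schwarz, and compute $\lVert W^{\{-1,0,1\}}\rVert = (n-1)^{-1}$ from the fact that $(W^{\{-1,0,1\}})^2$ is the indicator that pigeons $i_1,i_2$ both land in hole $j$. Your explicit justification of that last computation is a step the paper leaves implicit, but the argument is identical.
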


\begin{proof} 
For any $W$, we have:
\begin{align*}
	\E(DW) &= 2^{-(n - 2)} \cdot \E\left(DW^{\{-1, 0, 1\}}\right) \tag{Lemma \ref{exp-dw-plus-minus}}\\
	&\le 2^{-(n - 2)} \cdot \lVert D \rVert \lVert W^{\{-1, 0, 1\}} \rVert \tag{Cauchy-Schwarz}\\
	&= 2^{-(n - 2)} \cdot \lVert D \rVert \sqrt{\E\left(\left(W^{\{-1, 0, 1\}}\right)^2\right)}\\
	&= (n - 1)^{-1}2^{-(n - 2)} \cdot \lVert D \rVert.
\end{align*}
Using the value of $\E(D)$ from Corollary \ref{exp-d}, the dual value $\E(D) / \max_W |\E(DW)|$ is at least $\frac{(n - 2)!}{(n - 1)^{n - 2}} \cdot \frac{(n - 1)2^{n - 2}}{\lVert D \rVert}$.
\end{proof}

It only remains to compute $\lVert D \rVert$:\\

\begin{lemma}\label{norm-D}
\[
{\lVert D \rVert}^2 = \frac{(n - 2)!}{(n - 1)^{n - 2}} \cdot n! \cdot \sum_{c = 0}^{n - 1} \frac{(-1)^{n - 1 - c}}{n - c} \cdot \frac{1}{(n - 1)^{n - 1 - c}c!}
\]
\end{lemma}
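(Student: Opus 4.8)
The plan is to compute $\lVert D \rVert^2 = \E(D^2)$ directly by expanding $D = \sum_{S} c_S J_S$ into a double sum $\sum_{S,S'} c_S c_{S'} \E(J_S J_{S'})$, but it will be cleaner to use Lemma \ref{dual-intuition} instead. The key observation is that $D^2$ can be written as $D \cdot D$, where the second factor is a polynomial (a function on assignments). If I could peel off the dependence on a single pigeon I would be done inductively, but $D$ depends on all pigeons, so instead I would expand only one copy of $D$: write $\E(D^2) = \sum_{S \subsetneq [n]} c_S \E(D J_S)$. Now $J_S$ is a polynomial that depends only on the pigeons in $S$, and since $|S| \le n-1$, there is at least one pigeon $i \notin S$; hence by Lemma \ref{dual-intuition}, $\E(D J_S) = \E(J_{[n]\setminus\{i\}} J_S)$. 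Since $S \subseteq [n]\setminus\{i\}$, we have $J_{[n]\setminus\{i\}} J_S = J_{[n]\setminus\{i\}}$, so $\E(D J_S) = \E(J_{[n]\setminus\{i\}}) = (n-2)!/(n-1)^{n-2}$, independent of $S$.

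So $\E(D^2) = \frac{(n-2)!}{(n-1)^{n-2}} \sum_{S \subsetneq [n]} c_S$. The remaining task is to evaluate $\sum_{S \subsetneq [n]} c_S$ and match it to $n! \sum_{c=0}^{n-1} \frac{(-1)^{n-1-c}}{n-c}\cdot\frac{1}{(n-1)^{n-1-c} c!}$. Group the subsets $S$ by their size $c = |S|$, of which there are $\binom{n}{c}$; then $\sum_{S \subsetneq [n]} c_S = \sum_{c=0}^{n-1} \binom{n}{c} \frac{(-1)^{n-1-c}(n-1-c)!}{(n-1)^{n-1-c}}$. It remains to check the identity $\binom{n}{c}(n-1-c)! = \frac{n!}{(n-c)\,c!}$, which is immediate since $\binom{n}{c} = \frac{n!}{c!(n-c)!}$ and $(n-1-c)! = \frac{(n-c)!}{n-c}$. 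Substituting this gives exactly the claimed closed form.

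I expect the main (minor) obstacle to be a careful justification that it is legitimate to expand only one copy of $D$ and then apply Lemma \ref{dual-intuition} with $p = J_S$ — in particular, confirming that $J_S$ genuinely contains no variable $x_{i,j}$ or $\bar x_{i,j}$ for the chosen $i \notin S$, which holds because whether the pigeons in $S$ occupy distinct holes does not reference pigeon $i$ at all. Everything after that is bookkeeping: reindexing the sum by $|S|$, and the elementary factorial identity. No step requires anything beyond Lemma \ref{dual-intuition} and basic algebra.
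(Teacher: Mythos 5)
Your proof is correct, and it takes a genuinely different — and noticeably shorter — route than the paper's. The paper expands the full double sum $\sum_{S,T} c_S c_T \E(J_S J_T)$, computes $\E(J_S J_T)$ via conditioning, regroups the terms by $c = |S \cap T|$, and then evaluates the inner sum $\sum_{(S,T):\,|S\cap T|=c} (-1)^{|S|-|T|}$ by a pairing/involution argument (matching $(S,T)$ with $(S, T \triangle \{m_S\})$ where $m_S$ is the least element outside $S$), which yields $(-1)^{n-1-c}\binom{n}{c}$. You instead expand only one factor of $D$ and invoke Lemma \ref{dual-intuition} with $p = J_S$: since $S \subsetneq [n]$ there is a pigeon $i \notin S$ on which $J_S$ does not depend, so $\E(D J_S) = \E(J_{[n]\setminus\{i\}} J_S) = \E(J_{[n]\setminus\{i\}}) = (n-2)!/(n-1)^{n-2}$, a constant independent of $S$; this reduces everything to $\sum_{S \subsetneq [n]} c_S$ and a one-line factorial identity. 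Your concern about applying Lemma \ref{dual-intuition} is unfounded: $J_S$ is genuinely a function only of the locations of the pigeons in $S$, which is exactly the hypothesis needed. I verified your intermediate identity $\E(D^2) = \E(D)\sum_{S} c_S$ numerically for $n=3$ (both sides equal $1$), and the bookkeeping $\binom{n}{c}(n-1-c)! = n!/((n-c)\,c!)$ matches the paper's closed form exactly. The trade-off: the paper's argument is self-contained combinatorics on pairs of subsets, while yours reuses the structural lemma that drives the whole section and makes transparent why the answer factors as $\E(D)$ times a simple alternating binomial sum.
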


\begin{proof}
Recall the definition of $D$ (Definition \ref{D}):
\begin{align*}
	D &= \sum_{S \subsetneq [n]} c_SJ_S,\\
	c_S &= \frac{(-1)^{n - 1 - |S|} (n - 1 - |S|)!}{(n - 1)^{n - 1 - |S|}}.
\end{align*}
We compute $\lVert D \rVert^2 = \E(D^2)$ as follows.
\begin{equation*}
	\E(D^2) = \sum_{S \subsetneq [n]} \sum_{T \subsetneq [n]} c_Sc_T \cdot \E(J_SJ_T).
\end{equation*}
Given $S, T \subsetneq [n]$, we have:
\begin{align*}
	\E(J_SJ_T) &= \E(J_S)\E(J_T \mid J_S = 1)\\
	&= \left(\left(\prod_{i = 1}^{|S|} (n - i)!\right) / (n - 1)^{|S|}\right)\left(\left(\prod_{j = |S \cap T| + 1}^{|T|} (n - j)!\right) / (n - 1)^{|T \setminus S|}\right)
\end{align*}
Therefore,
\begin{align*}
	c_Sc_T \cdot \E(J_SJ_T) &= \left(c_S\left(\prod_{i = 1}^{|S|} (n - i)!\right) / (n - 1)^{|S|}\right)\left(c_T\left(\prod_{j = |S \cap T| + 1}^{|T|} (n - j)!\right) / (n - 1)^{|T \setminus S|}\right).
\end{align*}
Note that the product of $(-1)^{n - 1 - |S|}$ (from the $c_S$) and $(-1)^{n - 1 - |T|}$ (from the $c_T$) equals $(-1)^{|S| - |T|}$, so the above equation becomes:
\begin{align*}
	c_Sc_T \cdot \E(J_SJ_T) &= (-1)^{|S| - |T|} \left(\frac{(n - 2)!}{(n - 1)^{n - 2}}\right)\left(\frac{(n - 1 - |S \cap T|)!}{(n - 1)^{n - 1 - |S \cap T|}}\right).
\end{align*}
Now, we rearrange the sum for $\E(D^2)$ in the following way:
\begin{align*}
	\E(D^2) &= \sum_{S \subsetneq [n]} \sum_{T \subsetneq [n]} c_Sc_T \cdot \E(J_SJ_T)\\
	&= \frac{(n - 2)!}{(n - 1)^{n - 2}} \sum_{c = 0}^{n - 1} \frac{(n - 1 - c)!}{(n - 1)^{n - 1 - c}} \sum_{\substack{S, T \subsetneq [n],\\|S \cap T| = c}} (-1)^{|S| - |T|}.
\end{align*}

To evaluate this expression, fix $c \le n - 1$ and consider the inner sum. Consider the collection of tuples $\{(S, T) \mid S, T \subsetneq [n], |S \cap T| = c\}$. We can pair up (most of) these tuples in the following way. For each $S$, let $m_S$ denote the minimum element in $[n]$ that is not in $S$ (note that $m_S$ is well defined because $S$ cannot be $[n]$). We pair up the tuple $(S, T)$ with the tuple $(S, T \triangle \{m_S\})$, where $\triangle$ denotes symmetric difference. The only tuples $(S, T)$ that cannot be paired up in this way are those where $|S| = c$ and $T = [n] \setminus \{m_S\}$, because $T$ cannot be $[n]$. There are $\binom{n}{c}$ unpaired tuples $(S, T)$, and for each of these tuples, we have $(-1)^{|S| - |T|} = (-1)^{n - 1 - c}$. On the other hand, each pair $(S, T), (S, T \triangle \{m_S\})$ contributes 0 to the inner sum. Therefore, the inner sum equals $(-1)^{n - 1 - c}\binom{n}{c}$, and we have:
\begin{align*}
	\E(D^2) &= \frac{(n - 2)!}{(n - 1)^{n - 2}} \sum_{c = 0}^{n - 1} \frac{(-1)^{n - 1 - c}(n - 1 - c)!}{(n - 1)^{n - 1 - c}}\binom{n}{c}\\
	&= \frac{(n - 2)!}{(n - 1)^{n - 2}} \sum_{c = 0}^{n - 1} \frac{(-1)^{n - 1 - c}(n - 1 - c)!}{(n - 1)^{n - 1 - c}} \cdot \frac{n!}{c!(n - c)!}\\
	&= \frac{(n - 2)!}{(n - 1)^{n - 2}} \cdot n! \cdot \sum_{c = 0}^{n - 1} \frac{(-1)^{n - 1 - c}}{n - c} \cdot \frac{1}{(n - 1)^{n - 1 - c}c!}.
\end{align*}
\end{proof}
\begin{corollary}\label{cor:roughnormbound}
$\E(D^2) \leq \frac{n!}{(n-1)^{n-1}}$
\end{corollary}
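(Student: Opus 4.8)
The plan is to start from the closed form for $\lVert D \rVert^2 = \E(D^2)$ given in Lemma \ref{norm-D} and reduce the claim to an elementary estimate on a finite alternating series. Dividing the target inequality $\E(D^2) \le \frac{n!}{(n-1)^{n-1}}$ by $n!$ and multiplying through by $(n-1)^{n-1}$, and using $\frac{(n-2)!}{(n-1)^{n-2}}(n-1)^{n-1} = (n-1)!$, the statement becomes equivalent to
\[
(n-1)!\sum_{c=0}^{n-1}\frac{(-1)^{n-1-c}}{(n-c)\,(n-1)^{n-1-c}\,c!}\ \le\ 1 .
\]

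Next I would reindex the sum by $k = n-1-c$, so that $n-c = k+1$ and $c! = (n-1-k)!$, and the index $k$ runs over $0,\dots,n-1$. Setting $a_k := \dfrac{(n-1)!}{(k+1)(n-1)^k(n-1-k)!}$, the inequality becomes $\sum_{k=0}^{n-1}(-1)^k a_k \le 1$. The reason for this normalization is that $a_0 = \frac{(n-1)!}{1\cdot 1\cdot (n-1)!} = 1$, so it suffices to show that the partial sums of this finite alternating series never exceed the first term $a_0$.

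The one substantive step is to verify that $(a_k)_{k=0}^{n-1}$ is positive and nonincreasing. Positivity is immediate, and monotonicity follows from the ratio computation
\[
\frac{a_{k+1}}{a_k} \;=\; \frac{k+1}{k+2}\cdot\frac{n-1-k}{n-1}\;<\;1 \qquad (0\le k\le n-2),
\]
since both factors lie in $(0,1]$ and $\frac{k+1}{k+2}<1$. Given $a_0\ge a_1\ge\cdots\ge a_{n-1}\ge 0$, grouping the alternating sum as $a_0$ minus a sum of consecutive differences $(a_{2j-1}-a_{2j})\ge 0$ (together with a leftover term $-a_{n-1}\le 0$ when $n-1$ is odd) yields $\sum_{k=0}^{n-1}(-1)^k a_k \le a_0 = 1$, which is exactly the desired bound.

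I do not expect a genuine obstacle here: the entire argument is the standard alternating-series inequality applied after a change of index. The only points requiring care are bookkeeping — correctly tracking the sign $(-1)^{n-1-c}$ through the substitution $k = n-1-c$, and confirming that the normalizing prefactor collapses so that the leading term of the reindexed sum is exactly $1$ rather than something that would have to be cancelled against the tail.
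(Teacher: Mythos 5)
Your proposal is correct and follows essentially the same route as the paper: both start from the closed form in Lemma \ref{norm-D} and bound the alternating sum by its first term, the only difference being that you normalize so the leading term is $1$ and explicitly verify monotonicity of the $a_k$ via the ratio $\frac{a_{k+1}}{a_k} = \frac{k+1}{k+2}\cdot\frac{n-1-k}{n-1} < 1$, whereas the paper simply asserts that the magnitudes decrease. Your ratio check is a welcome extra detail, but the argument is the same.
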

\begin{proof}
Observe that the sum 
\[
\sum_{c = 0}^{n - 1} \frac{(-1)^{n - 1 - c}}{n - c} \cdot \frac{1}{(n - 1)^{n - 1 - c}c!}
\] 
is an alternating series where the magnitudes of the terms decrease as $c$ decreases. The two largest magnitude terms are $1/(n - 1)!$ and $-(1/2) \cdot 1/(n - 1)!$. Therefore, the sum is at most $\frac{1}{(n - 1)!}$, and we conclude that
\[
\E(D^2) \leq \frac{(n - 2)!}{(n - 1)^{n - 2}} \cdot \frac{n!}{(n-1)!} = \frac{n!}{(n-1)^{n-1}} 
\]
as needed.
\end{proof}
We can now complete the proof of Theorem \ref{thm:pigeonholelowerbound} \begin{proof}[Proof of Theorem \ref{thm:pigeonholelowerbound}]
By Lemma \ref{exp-DW-successful}, any Nullstellensatz proof for $\text{PHP}_n$ has total coefficient size at least $\frac{(n - 2)!}{(n - 1)^{n - 2}} \cdot \frac{(n - 1)2^{n - 2}}{\lVert D \rVert}$. By Corollary \ref{cor:roughnormbound}, $\lVert D \rVert \leq \sqrt{\frac{n!}{(n-1)^{n-1}}}$. Combining these results,
any Nullstellensatz proof for $\text{PHP}_n$ has total coefficient size at least 
\begin{align*}
    \frac{(n - 2)!}{(n - 1)^{n - 2}} \cdot \frac{(n - 1)2^{n - 2}}{\sqrt{\frac{n!}{(n-1)^{n-1}}}} &= \frac{2^{n-2}}{\sqrt{n}} \cdot \frac{\sqrt{(n-1)!}}{(n-1)^{\frac{n}{2} - \frac{3}{2}}} \\
    &= \frac{2^{n-2}(n-1)}{\sqrt{n}}\sqrt{\frac{(n-1)!}{(n-1)^{n-1}}}
\end{align*}
Using Stirling's approximation that $n!$ is approximately $\sqrt{2{\pi}n}\left(\frac{n}{e}\right)^n$, $\sqrt{\frac{(n-1)!}{(n-1)^{n-1}}}$ is approximately $\sqrt[4]{2{\pi}(n-1)}\left(\frac{1}{\sqrt{e}}\right)^{n-1}$ so this expression is $\Omega\left(n^{\frac{3}{4}}\left(\frac{2}{\sqrt{e}}\right)^{n}\right)$, as needed.
\end{proof}


\subsection{Experimental Results for $\text{PHP}_n$}


For small $n$, we computed the optimal dual values shown below. The first column of values is the optimal dual value for $n = 3, 4$. The second column of values is the optimal dual value for $n = 3, 4, 5, 6$ under the restriction that the only nonzero assignments are those where each pigeon goes to exactly one hole.

\begin{center}
\begin{tabular}{ |c|c|c| } 
 \hline
 $n$ & dual value & dual value, each pigeon goes to exactly one hole \\
 \hline
 3 & 11 & 6 \\ 
 4 & $41.4\overline{69}$ & 27 \\
 5 & - & 100 \\
 6 & - & 293.75 \\
 \hline
\end{tabular}
\end{center}

For comparison, the table below shows the value we computed for our dual solution and the lower bound of $\frac{2^{n-2}(n-1)}{\sqrt{n}}\sqrt{\frac{(n-1)!}{(n-1)^{n-1}}}$ that we showed in the proof of Theorem \ref{thm:pigeonholelowerbound}. (Values are rounded to 3 decimals.)

\begin{center}
\begin{tabular}{ |c|c|c| } 
 \hline
 $n$ & value of $D$ & proven lower bound on value of $D$ \\
 \hline
 3 & 4 & 1.633 \\ 
 4 & 18 & 2.828 \\
 5 & 64 & 4.382 \\
 6 & 210.674 & 6.4 \\
 \hline
\end{tabular}
\end{center}

It is possible that our lower bound on the value of $D$ can be improved. The following experimental evidence suggests that the dual value $\E(D) / \max_W |\E(DW)|$ of $D$ may actually be $\widetilde{\Theta}(2^n)$. For $n = 3, 4, 5, 6$, we found that the weakenings $W$ that maximize $|\E(DW)|$ are of the following form, up to symmetry. (By symmetry, we mean that we can permute pigeons/holes without changing $|\E(DW)|$, and we can flip sets of holes as in Lemma \ref{flip} without changing $|\E(DW)|$.)
\begin{itemize}
    \item For odd $n$ ($n = 3, 5$): $W$ is the weakening of the axiom $x_{1, 1}x_{2, 1} = 0$ where, for $i = 3, \dots, n$, we have $H_{W, i} = \{2, \dots, (n + 1)/2\}$.
    \item For even $n$ ($n = 4, 6$): $W$ is the following weakening of the axiom $x_{1, 1}x_{2, 1} = 0$. For $i = 3, \dots, n/2 + 1$, we have $H_{W, i} = \{2, \dots, n/2\}$. For $i = n/2 + 2, \dots, n$, we have $H_{W, i} = \{n/2 + 1, \dots, n - 1\}$.
\end{itemize}
If this pattern continues to hold for larger $n$, then experimentally it seems that\\$\E(D) / \max_W |\E(DW)|$ is $\widetilde{\Theta}(2^n)$, although we do not have a proof of this.

\section{Total coefficient size upper bound for the ordering principle}
In this section, we construct an explicit Nullstellensatz proof of infeasibility for the ordering principle $\text{ORD}_n$ with size and total coefficient size $2^n - n$. We start by formally defining the ordering principle.

\begin{definition}[ordering principle ($\mathrm{ORD}_n$)]
	Intuitively, the ordering principle says that any well-ordering on $n$ elements must have a minimum element. Formally, for $n \ge 1$, we define $\mathrm{ORD}_n$ to be the statement that the following system of axioms is infeasible:
	\begin{itemize}
		\item We have a variable $x_{i, j}$ for each pair $i, j \in [n]$ with $i < j$. $x_{i, j} = 1$ represents element $i$ being less than element $j$ in the well-ordering, and $x_{i, j} = 0$ represents element $i$ being more than element $j$ in the well-ordering.
		
        We write $x_{j,i}$ as shorthand for $1 - x_{i,j}$ (i.e. we take $x_{j,i} = \bar{x}_{i,j} = 1 - x_{i,j}$).
		\item For each $i \in [n]$, we have the axiom
		$\prod_{j \in [n] \setminus \{i\}}{x_{i,j}} = 0$
		which represents the constraint that element $i$ is not a minimum element. We call these axioms non-minimality axioms.
		\item For each triple $i,j,k \in [n]$ where $i < j < k$, we have the two axioms $x_{i,j}x_{j,k}x_{k,i} = 0$ and $x_{k,j}x_{j,i}x_{i,k} = 0$ which represent the constraints that elements $i, j, k$ satisfy transitivity. We call these axioms transitivity axioms.
	\end{itemize}
\end{definition}

%

In our Nullstellensatz proof, for each weakening $W$ of an axiom, its coefficient $c_W$ will either be $1$ or $0$. Non-minimality axioms will appear with coefficient $1$ and the only weakenings of transitivity axioms which appear have a special form which we describe below.

\begin{definition}[nice transitivity weakening]
	Let $W$ be a weakening of the axiom $x_{i,j}x_{j,k}x_{k,i}$ or the axiom $x_{k,j}x_{j,i}x_{i,k}$ for some $i < j < k$. Let $G(W)$ be the following directed graph. The vertices of $G(W)$ are $[n]$. For distinct $i', j' \in [n]$, $G(W)$ has an edge from $i'$ to $j'$ if $W$ contains the term $x_{i', j'}$. We say that $W$ is a \textit{nice transitivity weakening} if $G(W)$ has exactly $n$ edges and all vertices are reachable from vertex $i$.
\end{definition}

In other words, if $W$ is a weakening of the axiom $x_{i,j}x_{j,k}x_{k,i}$ or the axiom $x_{k,j}x_{j,i}x_{i,k}$ then $G(W)$ contains a 3-cycle on vertices $\{i, j, k\}$. $W$ is a nice transitivity weakening if and only if contracting this 3-cycle results in a (directed) spanning tree rooted at the contracted vertex. Note that if $W$ is a nice transitivity weakening and $x$ is an assignment with a minimum element then $W(x) = 0$.\\

\begin{theorem}\label{ordering-primal}
	There is a Nullstellensatz proof of infeasibility for $\text{ORD}_n$ satisfying:
	\begin{enumerate}
		\item The total coefficient size is $2^n - n$.
		\item Each $c_W$ is either 0 or 1.
		\item If $A$ is a non-minimality axiom, then $c_A = 1$ and $c_W = 0$ for all other weakenings of $A$.
		\item If $W$ is a transitivity weakening but not a nice transitivity weakening then $c_W = 0$.
	\end{enumerate}
\end{theorem}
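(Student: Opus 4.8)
The plan is to construct the Nullstellensatz proof explicitly and verify that the claimed linear combination of axiom weakenings telescopes to $1$. First I would set up the bookkeeping: an assignment $x$ corresponds to a tournament on $[n]$, and infeasibility means every tournament either has a minimum element (killed by some non-minimality axiom) or contains a directed triangle (killed by some transitivity axiom). Since every $c_W \in \{0,1\}$, proving $\sum_W c_W W = 1$ amounts to showing that for every assignment $x$, exactly one weakening $W$ with $c_W = 1$ satisfies $W(x) = 1$. So the real content is to define, for each tournament $x$, a \emph{canonical} axiom-weakening that evaluates to $1$ on $x$ and on no other assignment that would be double-counted, and to check that the total number of weakenings used is $2^n - n$.

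The natural canonical choice: given a tournament $x$, if it has a (necessarily unique) minimum element $i$, assign it to the non-minimality axiom $\prod_{j \neq i} x_{i,j}$ itself (with no weakening) — note there are $n$ such axioms and each fires on exactly the tournaments where $i$ beats everyone, but those classes overlap, so actually the non-minimality axioms alone are \emph{not} disjoint. This is the first thing to get right: a tournament can have at most one minimum, so the events "$i$ is the min" are disjoint, but $\prod_{j\neq i} x_{i,j} = 1$ exactly says $i$ is the min, so the $n$ non-minimality axioms \emph{are} pairwise disjoint as $\{0,1\}$-functions, covering exactly the tournaments with a minimum. Good. Then for a tournament with no minimum, I must pick a nice transitivity weakening firing on it. The idea suggested by the definition of nice transitivity weakening is: $G(W)$ must be a 3-cycle on some $\{i,j,k\}$ plus $n-3$ more edges forming, after contracting the triangle, a spanning out-tree rooted at the contracted vertex. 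Such a $W$ fires on $x$ iff $x$ restricted to those $n$ edges agrees with $W$, i.e. $x$ contains that specific directed triangle and that specific spanning-tree structure. So I need: every minimum-free tournament contains a directed triangle $\{i,j,k\}$ with $i<j<k$ in one of the two cyclic orientations, and a canonical way to extend to a spanning out-tree. I would root the search at the triangle's smallest vertex $i$ and take, say, a BFS/DFS tree in $x$ from the contracted triangle — this is where I'd need an argument that the choice is well-defined and that distinct tournaments get distinct weakenings, or rather that the weakenings used partition the minimum-free tournaments.

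The cleanest route to the count is probably inductive or bijective rather than via the partition directly. I would count: the proof uses all $n$ non-minimality axioms (contributing $n$), plus some collection of nice transitivity weakenings; the total must be $2^n - n$, so exactly $2^n - 2n$ nice transitivity weakenings are used. Equivalently, the number of minimum-free tournament-classes I carve out must be $2^n - 2n$, and since each nice transitivity weakening $W$ pins down $n$ of the $\binom{n}{2}$ coordinates, it fires on $2^{\binom{n}{2} - n}$ tournaments — so I'd want these to tile the $2^{\binom{n}{2}} - (\text{number of tournaments with a min})$ minimum-free tournaments. The number of tournaments with a minimum is exactly... hmm, it is \emph{not} $n \cdot 2^{\binom{n}{2}-(n-1)}$ naively because of no overlap it actually is $n \cdot 2^{\binom{n-1}{2}}$, which does not obviously combine to give $2^n - 2n$ weakenings of size-$2^{\binom{n}{2}-n}$ support. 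So the tiling picture must be subtler — the supports of the chosen weakenings presumably are \emph{not} disjoint in the naive sense, and cancellation among the $W(x)$ values as $\{0,1\}$-functions is impossible since all coefficients are nonnegative.

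Reconciling this is the main obstacle: I expect the actual construction is recursive on $n$ — peel off element $n$ (or element $1$), use the $\text{ORD}_{n-1}$ proof on the remaining elements, and patch in new weakenings that handle the tournaments where $n$'s presence matters, accumulating $2^n - n = 2(2^{n-1}-(n-1)) + (n-2)$ or similar, which suggests doubling plus a linear correction. The hard part will be nailing down exactly which new weakenings get added at each recursion step and proving the $\{0,1\}$-function identity $\sum_W c_W W = 1$ is preserved — i.e., that the newly added nice transitivity weakenings fire on precisely the assignments that the inductive proof of $\text{ORD}_{n-1}$ fails to cover once we forget element $n$, with no double-covering. Once the recursive construction and its correctness are in hand, properties 2, 3, 4 and the count $2^n - n$ should be immediate from the construction, and the size bound equals the total coefficient size since every nonzero coefficient is $1$ and each $q_i$ is (a sum of) distinct monomials.
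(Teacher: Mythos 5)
Your high-level plan is right — the non-minimality axioms are pairwise disjoint indicators covering exactly the assignments with a minimum, the nice transitivity weakenings must then exactly partition the minimum-free assignments, and the construction is indeed recursive with a ``doubling plus linear correction'' count. But the proposal stops exactly where the proof begins: you explicitly defer ``nailing down exactly which new weakenings get added at each recursion step and proving the identity is preserved,'' and that is the entire content of the theorem. Nothing in the proposal specifies the weakenings, so properties 1--4 and the partition property are not established. For the record, the paper's recursion peels off the \emph{pair} $\{1, n+1\}$ rather than a single element: it takes the set $S_n$ of transitivity weakenings from the $\mathrm{ORD}_n$ proof and uses two copies, one on elements $\{1,\dots,n\}$ multiplied by $x_{1,n+1}$ and one on elements $\{2,\dots,n+1\}$ (indices shifted) multiplied by $x_{n+1,1}$, and then patches with $2(n-1)$ new weakenings: for each $i \in \{2,\dots,n\}$ and each of the two transitivity axioms $A$ on $(1,i,n+1)$, the weakening $A \cdot \prod_{j \in [n]\setminus\{i\}} x_{i,j}$. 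The case analysis (minimum exists; no minimum and $x_{1,n+1}=1$ with or without a minimum among $\{1,\dots,n\}$; symmetrically for $x_{n+1,1}=1$) is what certifies that exactly one chosen weakening fires on each assignment, and the count $(n+1) + 2(2^n-2n) + 2(n-1) = 2^{n+1}-(n+1)$ gives property 1.

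One smaller point: your worry that the tiling ``does not obviously combine'' is an arithmetic false alarm. Each nice transitivity weakening fixes $n$ of the $\binom{n}{2}$ coordinates, so its support has size $2^{\binom{n}{2}-n}$, and the number of minimum-free tournaments is $2^{\binom{n}{2}} - n\cdot 2^{\binom{n-1}{2}} = 2^{\binom{n}{2}-n}\bigl(2^n - 2n\bigr)$, since $\binom{n}{2}-\binom{n-1}{2} = n-1$. So the supports of the $2^n-2n$ nice transitivity weakenings can (and in the paper's construction do) tile the minimum-free assignments disjointly; no subtler cancellation is needed, which is consistent with all coefficients being $0$ or $1$. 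Had you pushed this computation through, it would have confirmed rather than undermined the partition picture, but it still would not by itself produce the explicit choice of weakenings.
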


\textbf{Proof.} We prove Theorem \ref{ordering-primal} by induction on $n$. When $n = 3$, the desired Nullstellensatz proof sets $c_A = 1$ for each axiom $A$. It can be verified that $\sum_W c_WW$ evaluates to 1 on each assignment, and that this Nullstellensatz proof satisfies the properties of Theorem \ref{ordering-primal}.

Now suppose we have a Nullstellensatz proof for $\text{ORD}_n$ satisfying Theorem \ref{ordering-primal}, and let $S_n$ denote the set of transitivity weakenings $W$ for which $c_W = 1$. The idea to obtain a Nullstellensatz proof for $\text{ORD}_{n + 1}$ is to use two ``copies'' of $S_n$, the first copy on elements $\{1, \dots, n\}$ and the second copy on elements $\{2, \dots, n + 1\}$. Specifically, we construct the Nullstellensatz proof for $\text{ORD}_{n + 1}$ by setting the following $c_W$ to 1 and all other $c_W$ to 0.
\begin{enumerate}
	\item For each non-minimality axiom $A$ in $\text{ORD}_{n + 1}$, we set $c_A = 1$.
	\item For each $W \in S_n$, we define the transitivity weakening $W'$ on $n + 1$ elements by $W' = W \cdot x_{1, n + 1}$ and set $c_{W'} = 1$.
	\item For each $W \in S_n$, first we define the transitivity weakening $W''$ on $n + 1$ elements by replacing each variable $x_{i, j}$ that appears in $W$ by $x_{i + 1, j + 1}$. (e.g., if $W = x_{1, 2}x_{2, 3}x_{3,1}$, then $W'' = x_{2, 3}x_{3, 4}x_{4,2}$.) Then, we define $W''' = W''x_{n + 1,1}$ and set $c_{W'''} = 1$.
	\item For each $i \in \{2, \dots, n\}$, for each of the 2 transitivity axioms $A$ on $(1, i, n + 1)$, we set $c_W = 1$ for the following weakening $W$ of $A$:
	\begin{equation*}
		W = A\left(\prod_{j \in [n] \setminus \{i\}}{x_{i, j}}\right).
	\end{equation*}
	In other words, $W(x) = 1$ if and only if $A(x) = 1$ and $i$ is the minimum element among the elements $[n+1] \setminus \{1, n + 1\}$.
\end{enumerate}

The desired properties 1 through 4 in Theorem \ref{ordering-primal} can be verified by induction. It remains to show that for each assignment $x$, there is exactly one nonzero $c_W$ for which $W(x) = 1$. If $x$ has a minimum element $i \in [n+1]$, then the only nonzero $c_W$ for which $W(x) = 1$ is the non-minimality axiom for $i$. Now suppose that $x$ does not have a minimum element. Consider two cases: either $x_{1, n + 1} = 1$, or $x_{n + 1,1} = 1$. Suppose $x_{1, n + 1} = 1$. Consider the two subcases:
\begin{enumerate}
    \item Suppose that, if we ignore element $n + 1$, then there is still no minimum element among the elements $\{1, \dots, n\}$. Then there is exactly one weakening $W$ in point 2 of the construction for which $W(x) = 1$, by induction.
    \item Otherwise, for some $i \in \{2, \dots, n\}$, we have that $i$ is a minimum element among $\{1, \dots, n\}$ and $x_{n + 1,i} = 1$. Then there is exactly one weakening $W$ in point 4 of the construction for which $W(x) = 1$ (namely the weakening $W$ of the axiom $A = x_{i,1}x_{1, n + 1}x_{n+1,i}$).
\end{enumerate}
The case $x_{n + 1,1} = 1$ is handled similarly by considering whether there is a minimum element among $\{2, \dots, n + 1\}$. Assignments that do have a minimum element among $\{2, \dots, n + 1\}$ are handled by point 3 of the construction, and assignments that do not are handled by point 4 of the construction. $\qed$
\subsection{Restriction to instances with no minimial element}
We now observe that for the ordering principle, we can restrict our attention to instances which have no minimum element.
\begin{lemma}
Suppose we have coefficients $c_W$ satisfying $\sum_W c_{W}W(x) = 1$ for all assignments $x$ that have no minimum element (but it is possible that $\sum_W c_{W}W(x) \neq 1$ on assignments $x$ that do have a minimum element). Then there exist coefficients $c'_{W}$ such that $\sum_W c'_{W}W = 1$ (i.e., the coefficients $c_W'$ are a valid primal solution) with
\begin{equation*}
    \sum_{W}{|c'_W|} \leq (n + 1)\left(\sum_{W}{ |c_W|}\right) + n.
\end{equation*}
\end{lemma}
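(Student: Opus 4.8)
The plan is to correct the given coefficients on the ``bad'' assignments — those that \emph{do} have a minimum element — using the non-minimality axioms, which are exactly the objects that detect minimum elements. Write $f = \sum_W c_W W$, where $W$ ranges over all weakenings of axioms and the $c_W$ are as given; by hypothesis $f(x) = 1$ for every assignment $x$ with no minimum element, while $f(x)$ may be arbitrary when $x$ has a minimum element. For $i \in [n]$ let $A_i = \prod_{j \in [n] \setminus \{i\}}{x_{i,j}}$ be the non-minimality axiom for element $i$, and set $g = \sum_{i=1}^{n} A_i$. Two observations: (i) for any assignment at most one $A_i$ evaluates to $1$, since two distinct minimum elements $i \neq i'$ would require both $x_{i,i'} = 1$ and $x_{i',i} = 1$, contradicting $x_{i',i} = 1 - x_{i,i'}$; and (ii) therefore $g(x) = 1$ precisely when $x$ has a minimum element and $g(x) = 0$ otherwise.

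First I would check the function identity $f \cdot (1 - g) + g = 1$ on every assignment: when $x$ has no minimum element, $g(x) = 0$ and the left side is $f(x) = 1$ by hypothesis; when $x$ has a minimum element, $g(x) = 1$ and the left side is $f(x) \cdot 0 + 1 = 1$. Next I would exhibit $f(1-g) + g$ as an explicit signed combination of weakenings by expanding
\[
f(1-g) + g = \sum_{W}{c_W W} \;-\; \sum_{W}\sum_{i=1}^{n}{c_W\,(W A_i)} \;+\; \sum_{i=1}^{n}{A_i}.
\]
Here each $A_i$ is literally an axiom, hence a weakening of itself; and each product $W A_i$ is a product of literals, so either it contains some literal together with its negation (in which case it is identically $0$ and can be dropped) or it simplifies to a monomial still containing all the literals of the axiom weakened by $W$, hence is again a weakening of that axiom. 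Collecting terms thus yields coefficients $c'_{W'}$ with $\sum_{W'}{c'_{W'} W'} = 1$ on all assignments, i.e. a valid primal solution.

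It then remains to bound $\sum_{W'}{|c'_{W'}|}$, which I would do by subadditivity of total coefficient size across the three groups of terms above. The first group contributes at most $\sum_W{|c_W|}$. In the second group, for each fixed $W$ there are at most $n$ nonzero products $W A_i$, each carrying coefficient $c_W$, so this group contributes at most $n\sum_W{|c_W|}$. The third group contributes $n$. Adding these gives $\sum_{W'}{|c'_{W'}|} \leq (n+1)\big(\sum_W{|c_W|}\big) + n$, as claimed.

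The one piece of genuine insight is the algebraic step $1 = f(1-g) + g$ — i.e. recognizing that $f$ already agrees with the indicator function of ``no minimum element'' and that $g$ is exactly the complementary indicator, expressed through the non-minimality axioms. Everything after that is routine: verifying that $W A_i$ is again a weakening (or identically zero), and the $(n+1)$-fold accounting of coefficients. The only point needing a little care is that multiplying $f$ by $g = \sum_i A_i$ inflates the total coefficient size of that group by a factor of at most $n$, which is where the $(n+1)$ in the bound comes from.
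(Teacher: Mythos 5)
Your proposal is correct and is essentially the paper's own argument: the paper defines functions $C_i$ with $\sum_W C_i(W)W = f\cdot A_i$ and sets $C' = C - \sum_i C_i + \sum_i A_i$, which is exactly your identity $1 = f(1-g)+g$ with $g = \sum_i A_i$, and the coefficient accounting is the same. Your write-up is just a more explicit version, including the check that each $W A_i$ is again a weakening (or vanishes) and that collisions among products only help via the triangle inequality.
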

This lemma says that, to prove upper or lower bounds for $\text{ORD}_n$ by constructing primal or dual solutions, it suffices to consider only assignments $x$ that have no minimum element, up to a factor of $O(n)$ in the solution value.
\begin{proof}
Let $C$ denote the function on weakenings that maps $W$ to $c_W$. For $i \in [n]$, we will define the function $C_i$ on weakenings satisfying the properties:
\begin{itemize}
    \item If $x$ is an assignment where $i$ is a minimum element, then $\sum_{W}{C_i(W)W(x)} = \sum_{W}{C(W)W(x)}$.
    \item Otherwise, $\sum_{W}{C_i(W)W(x)} = 0$.
\end{itemize}
Let $A_i = \prod_{j\in [n] \setminus \{i\}}{x_{i, j}}$ be the non-minimality axiom for $i$. Intuitively, we want to define $C_i$ as follows: For all $W$, $C_i(A_iW) = C(W)$. (If $W$ is a weakening that is not a weakening of $A_i$, then $C_i(W) = 0$.) The only technicality is that multiple weakenings $W$ may become the same when multiplied by $A_i$, so we actually define $C_i(A_iW) = \sum_{W': A_iW' = A_iW} C(W')$.

Finally, we use the functions $C_i$ to define the function $C'$:
\begin{equation*}
    C' = C - \left(\sum_{i = 1}^n C_i\right) + \left(\sum_{i = 1}^n A_i\right).
\end{equation*}
By taking $c'_W = C'(W)$, the $c'_W$ are a valid primal solution with the desired bound on the total coefficient size.
\end{proof}
\subsection{Experimental results}
For small values of $n$, we have computed both the minimum total coefficient size of a Nullstellensatz proof of the ordering principle and the value of the linear program where we restrict our attention to instances $x$ which have no minimum element.

We found that for $n = 3,4,5$, the minimum total coefficient size of a Nullstellensatz proof of the ordering principle is $2^n - n$ so the primal solution given by Theorem \ref{ordering-primal} is optimal. However, for $n = 6$ this solution is not optimal as the minimum total coefficient size is $52$ rather than $2^6 - 6 = 58$.

If we restrict our attention to instances $x$ which have no minimum element then for $n = 3,4,5,6$, the value of the resulting linear program is equal to $2\binom{n}{3}$, which is the number of transitivity axioms. However, this is no longer true for $n = 7$, though we did not compute the exact value.

\section{Analyzing Total Coefficient Size for Stronger Proof Systems}
In this section, we consider the total coefficient size for two stronger proof systems, sum of squares proofs and a proof system which is between Nullstellensatz and sum of squares proofs which we call resolution-like proofs.
\begin{definition}
Given a system of axioms $\{p_i = 0: i \in [m]\}$, we define a resolution-like proof of infeasibility to be an equality of the form
\[
-1 = \sum_{i=1}^{m}{{p_i}{q_i}} + \sum_{j}{{c_j}g_j}
\]
where each $g_j$ is a monomial and each coefficient $c_j$ is non-negative. We define the total coefficient size of such a proof to be $\sum_{i=1}^{m}{T(q_i)} + \sum_{j}{c_j}$.
\end{definition}
We call this proof system resolution-like because it captures the resolution-like calculus introduced for Max-SAT by Mar\'{i}a Luisa Bonet, Jordi Levy, and Felip Many\`{a}
\cite{BONET2007606}. The idea is that if we have deduced that $x{r_1} \leq 0$ and $\bar{x}{r_2} \leq 0$ for some variable $x$ and monomials $r_1$ and $r_2$ then we can deduce that ${r_1}{r_2} \leq 0$ as follows:
\[
{r_1}{r_2} = x{r_1} - (1 - r_2)x{r_1} + \bar{x}{r_2} - (1 - r_1)\bar{x}{r_2}
\]
where we decompose $(1 - r_1)$ and $(1-r_2)$ into monomials using the observation that $1 - \prod_{i=1}^{k}{x_i} = \sum_{j = 1}^{k}{(1 - x_j)\left(\prod_{i=1}^{j-1}{x_i}\right)}$.

The minimum total coefficient size of a resolution-like proof can be found using the following linear program.
\begin{enumerate}
    \item[] Primal: Minimize $\sum_{i=1}^{m}{T(q_i)} + \sum_{j}{c_j}$ subject to $\sum_{i=1}^{m}{{p_i}{q_i}} + \sum_{j}{{c_j}g_j} = -1$
    \item[] Dual: Maximize $D(1)$ subject to the constraints that
    \begin{enumerate}
        \item[1.] $D$ is a linear map from polynomials to $\mathbb{R}$.
        \item[2.] For each $i \in [m]$ and each monomial $r$, $|D(rp_i)| \leq 1$.
        \item[3.] For each monomial $r$, $D(r) \geq -1$.
    \end{enumerate}
\end{enumerate}
\begin{definition}
Given a system of axioms $\{p_i = 0: i \in [m]\}$, a Positivstellensatz/sum of squares proof of infeasibility is an equality of the form
\[
-1 = \sum_{i=1}^{m}{{p_i}{q_i}} + \sum_{j}{g_j^2}
\]
We define the total coefficient size of a Positivstellensatz/sum of squares proof to be $\sum_{i=1}^{m}{T(q_i)} + \sum_{j}{T(g_j)^2}$
\end{definition}

\begin{enumerate}
    \item[] Primal: Minimize $\sum_{i=1}^{m}{T(q_i)} + \sum_{j}{T(g_j)^2}$ subject to the constraint that $-1 = \sum_{i=1}^{m}{{p_i}{q_i}} + \sum_{j}{g_j^2}$.
    \item[] Dual: Maximize $D(1)$ subject to the constraints that
    \begin{enumerate}
        \item[1.] $D$ is a linear map from polynomials to $\mathbb{R}$.
        \item[2.] For each $i \in [m]$ and each monomial $r$, $|D(rp_i)| \leq 1$.
        \item[3.] For each polynomial $g_j$, $D((g_j)^2) \geq -T(g_j)^2$,
    \end{enumerate}
\end{enumerate}
\subsection{Failure of the dual certificate for resolution-like proofs}
In this subsection, we observe that our dual certificate does not give a lower bound on the total coefficient size for resolution-like proofs of the pigeonhole principle because it has a large negative value on some monomials.
\begin{theorem}
The value of the dual certificate on the polynomial $\prod_{i=1}^{n}{\bar{x}_{i1}}$ is \\ $-\frac{(n-2)!}{(n-1)^{n-1}}\left(1 - \frac{(-1)^{n - 1}}{(n-1)^{n-2}}\right)$
\end{theorem}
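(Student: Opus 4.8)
The goal is to evaluate $D\!\left(\prod_{i=1}^{n}\bar{x}_{i1}\right)$, where $D = \sum_{S \subsetneq [n]} c_S J_S$ is the dual solution from Definition \ref{D}. The plan is to reduce this to a computation about the functions $J_S$ evaluated against the polynomial $p = \prod_{i=1}^{n}\bar{x}_{i1}$, which is the indicator (over the relevant assignments where each pigeon goes to exactly one hole) of the event that \emph{no} pigeon goes to hole $1$. Since $p$ depends on every pigeon, Lemma \ref{dual-intuition} does not apply directly, so I would instead expand $D(p) = \sum_{S \subsetneq [n]} c_S \,\E(J_S p)$ and compute each $\E(J_S p)$ explicitly.

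First I would compute $\E(J_S p)$. Conditioned on no pigeon going to hole $1$, each pigeon in $S$ is uniform over the remaining $n-2$ holes, and $J_S = 1$ requires the $|S|$ pigeons in $S$ to occupy distinct holes among those $n-2$; meanwhile the event ``no pigeon goes to hole $1$'' has probability $\left(\frac{n-2}{n-1}\right)^n$. This gives a clean closed form $\E(J_S p) = \left(\frac{n-2}{n-1}\right)^n \cdot \frac{(n-2)!}{(n-2-|S|)!(n-2)^{|S|}}$ (with the convention that this is $0$ when $|S| > n-2$, i.e.\ only $|S| \le n-2$ contributes, consistent with $S \subsetneq [n]$ but we also need $|S|\le n-2$ for a nonzero term). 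Then $D(p) = \sum_{s=0}^{n-2} \binom{n}{s} c_S \,\E(J_S p)$ where $c_S$ depends only on $s = |S|$; substituting $c_S = \frac{(-1)^{n-1-s}(n-1-s)!}{(n-1)^{n-1-s}}$ turns this into a single-variable alternating sum over $s$.

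Next I would simplify that sum. After pulling out the common factor $\left(\frac{n-2}{n-1}\right)^n$ and collecting powers of $(n-1)$ and $(n-2)$, the binomial and factorial factors should telescope or recombine into something recognizable — I expect the sum over $s$ to collapse to just two surviving terms (analogous to the two-term cancellation in Lemma \ref{dual-intuition}, but here the cancellation is imperfect because $p$ genuinely depends on pigeon $i$ for every $i$), leaving exactly the claimed expression $-\frac{(n-2)!}{(n-1)^{n-1}}\left(1 - \frac{(-1)^{n-1}}{(n-1)^{n-2}}\right)$. Concretely I would try to identify the sum with a finite difference of a simple function, or match it term-by-term against the binomial expansion of $\left(1 - \frac{1}{n-1}\right)^{\text{something}}$, using that $\left(\frac{n-2}{n-1}\right)$ appears to the $n$-th power.

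The main obstacle I anticipate is the bookkeeping in the last step: making the alternating sum $\sum_s \binom{n}{s}\frac{(-1)^{n-1-s}(n-1-s)!}{(n-1)^{n-1-s}} \cdot \frac{(n-2)!}{(n-2-s)!(n-2)^s}$ collapse cleanly to the two-term answer requires the right regrouping, and it is easy to drop a factor of $(n-1)$ or mishandle the $s = n-2$ (and possibly $s = n-1$, which contributes $0$) boundary terms. A useful sanity check before committing to the algebra is to verify the formula at $n = 3$ and $n = 4$ against direct enumeration of the $(n-1)^n$ assignments, which also pins down the sign and confirms that the value is indeed negative (explaining why the dual certificate fails to certify a lower bound for resolution-like proofs, since the constraint $D(r) \ge -1$ is what matters there).
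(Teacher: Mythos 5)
Your approach is sound but genuinely different from the paper's. The paper never expands $D$ over all subsets $S$: it writes $\prod_{i=1}^{n}\bar{x}_{i1} = \prod_{i=2}^{n}\bar{x}_{i1} - x_{11}\prod_{i=3}^{n}\bar{x}_{i1} + x_{11}x_{21}\prod_{i=3}^{n}\bar{x}_{i1}$ and evaluates $D$ on each piece separately; the first two pieces leave some pigeon unrestricted, so Lemma \ref{dual-intuition} collapses each to a single $J$-term, and the third is handled by Corollary \ref{flip} together with a direct evaluation of $D$ on the one assignment sending every pigeon to hole $1$. That route needs no summation over $S$ at all. Your route, computing $\E(J_S p)$ by conditioning on the event that no pigeon enters hole $1$, is also correct: the closed form $\E(J_S p)=\left(\frac{n-2}{n-1}\right)^{n}\frac{(n-2)!}{(n-2-|S|)!\,(n-2)^{|S|}}$ checks out, as does the observation that only $|S|\le n-2$ contributes.

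The gap is the final summation, which you leave at ``I expect the sum to collapse to two surviving terms.'' It does not collapse via the pairwise $T\leftrightarrow T\cup\{1\}$ cancellation of Lemma \ref{dual-intuition}: because $p$ forces pigeon $1$ to avoid hole $1$, the ratio $\E(J_{T\cup\{1\}}p)/\E(J_{T}p)$ becomes $\frac{n-2-|T|}{n-1}$ instead of $\frac{n-1-|T|}{n-1}$, so consecutive terms no longer cancel and the alternating sum must be evaluated honestly. It can be: substitute $s=n-2-u$, split $\frac{u+1}{(u+2)!}=\frac{1}{(u+1)!}-\frac{1}{(u+2)!}$, and apply $\sum_{v=0}^{k}\binom{k}{v}\frac{(-1)^{v}}{(n-1)^{v}(n-2)^{k-v}}=\frac{1}{\left((n-1)(n-2)\right)^{k}}$; everything then reduces to a few boundary terms and yields $D(p)=-\frac{(n-2)!}{(n-1)^{n-1}}\left(1-\frac{1}{(n-1)^{n-2}}\right)$. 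Note that this agrees with the stated theorem only for odd $n$: your own proposed sanity check at $n=4$ gives $-16/243$ by direct enumeration (and by your sum), whereas the stated formula gives $-20/243$. The discrepancy is a sign slip in the paper (a factor $(-1)^{n-1}$ is dropped when it evaluates $D$ on $\prod_{i=1}^{n}x_{i1}$), not a flaw in your method, and the qualitative conclusion that $D$ takes a large negative value on this monomial is unaffected.
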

\begin{proof}
To show this, we make the following observations.
\begin{enumerate}
    \item The value of the dual certificate on the polynomial $\prod_{i=2}^{n}{\bar{x}_{i1}}$ is $0$.
    \item The value of the dual certificate on the polynomial $x_{11}\prod_{i=3}^{n}{\bar{x}_{i1}}$ is $\frac{(n-2)!}{(n-1)^{n-1}}$
    \item The value of the dual certificate on the polynomial $x_{11}x_{21}\prod_{i=3}^{n}{\bar{x}_{i1}}$ is $\frac{(-1)^{n-2}(n-2)!}{(n-1)^{2n-3}}$.
\end{enumerate}
For the first observation, observe that since the first pigeon is unrestricted, every term of the dual certificate cancels except $J_{\{2,3,\ldots,n\}}$ which is $0$ as none of these pigeons can go to hole $1$. For the second observation, observe that since the second pigeon is unrestriced, every term of the dual certificate cancels except $J_{\{1,3,4,\ldots,n\}}$ which gives value $\frac{\E(D)}{n-1} = \frac{(n-2)!}{(n-1)^{n-1}}$. For the third observation, observe that by Lemma \ref{flip}, the value of the dual certificate on the polynomial $x_{11}x_{21}\prod_{i=3}^{n}{\bar{x}_{i1}}$ is $(-1)^{n-2}$ times the value of the dual certificate on the polynomial $\prod_{i=1}^{n}{x_{i1}}$ which is 
\[
\frac{1}{(n-1)^{n}}\left(-\frac{(n-1)!}{(n-1)^{n-1}} + n\frac{(n-2)!}{(n-1)^{n-2}}\right) = \frac{(n-2)!}{(n-1)^{2n-3}}
\]
Putting these observations together, the value of the dual certificate for the polynomial 
\[
\prod_{i=1}^{n}{\bar{x}_{i1}} = \prod_{i=2}^{n}{\bar{x}_{i1}} - x_{11}\prod_{i=3}^{n}{\bar{x}_{i1}} + x_{11}x_{21}\prod_{i=3}^{n}{\bar{x}_{i1}}
\]
is $-\frac{(n-2)!}{(n-1)^{n-1}} + \frac{(-1)^{n - 2}(n-2)!}{(n-1)^{2n-3}} = -\frac{(n-2)!}{(n-1)^{n-1}}\left(1 - \frac{(-1)^{n - 1}}{(n-1)^{n-2}}\right)$.
\end{proof}
\subsection{Small total coefficient size sum of squares proof of the ordering principle}
In this subsection, we show that the small size resolution proof of the ordering principle \cite{staalmarck1996short}, which seems to be dynamic in nature, can actually be mimicked by a sum of squares proof. Thus, while sum of squares requires degree $\tilde{\Theta}(\sqrt{n})$ to refute the negation of the ordering principle \cite{potechin:LIPIcs:2020:12590}, there is a sum of squares proof which has polynomial size and total coefficient size. To make our proof easier to express, we define the following monomials
\begin{definition} \ 
\begin{enumerate}
\item Whenever $1 \leq j \leq m \leq n$, let 
$F_{jm} = \prod_{i \in [m] \setminus \{j\}}{x_{ji}}$ 
be the monomial which is $1$ if $x_j$ is the first element in $x_1,\dots,x_m$ and $0$ otherwise.
\item For all $m \in [n-1]$ and all distinct $j,k \in [m]$, we define $T_{jmk}$ to be 
the monomial 
\[
T_{jmk} = F_{jm}x_{(m+1)j}x_{k(m+1)}\prod_{i \in [k-1] \setminus \{j\}}{x_{(m+1)i}}
\]
Note that $T_{jmk}$ is a multiple of $x_{(m+1)j}x_{jk}x_{k(m+1)}$ so it is a weakening of a transitivity axiom.
\end{enumerate}
\end{definition}
With these definitions, we can now express our proof.
\begin{theorem}
The following equality (modulo the axioms that $x_{ij}^2 = x_{ij}$ and $x_{ij}x_{ji} = 0$ for all distinct $i,j \in [n]$) gives an SoS proof that the total ordering axioms are infeasible.
\[
-1 = \sum_{m=1}^{n-1}{\left(\left(F_{(m+1)(m+1)} - \sum_{j=1}^{m}{F_{jm}F_{(m+1)(m+1)}}\right)^2 - \sum_{j=1}^{m}{\sum_{k \in [m] \setminus \{j\}}{T_{jmk}}}\right)} 
- \sum_{j=1}^{n}{F_{jn}}
\]
\end{theorem}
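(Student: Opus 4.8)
The plan is to verify the claimed identity by checking it evaluates correctly on every Boolean assignment $x$ to the variables $x_{ij}$, since a polynomial identity over the Boolean cube (modulo the axioms $x_{ij}^2 = x_{ij}$ and $x_{ij}x_{ji} = 0$) is equivalent to pointwise agreement. It suffices to restrict attention to assignments that correspond to tournaments (for each $i < j$, exactly one of $x_{ij}, x_{ji}$ is $1$), since the axioms $x_{ij}x_{ji}=0$ already rule out the rest, and on such assignments the ``sum of squares'' terms and the weakening monomials all evaluate to $0$ or $1$. The key structural observation I would isolate first is that $F_{jm}(x) = 1$ iff $j$ is the minimum of $\{1,\dots,m\}$ in the relation encoded by $x$, and $T_{jmk}(x) = 1$ iff $j$ is the minimum of $\{1,\dots,m\}$, $m+1$ beats $j$, and $k$ is the minimum of $\{1,\dots,m+1\} \setminus \{j\}$ — i.e., $k$ is the ``second-smallest'' once $m+1$ is inserted below $j$. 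I would also note $T_{jmk}$ is a weakening of a transitivity axiom, so on any $x$ satisfying all transitivity axioms, every $T_{jmk}(x) = 0$; hence on such $x$ the whole right side equals $\sum_{m}(F_{(m+1)(m+1)} - \sum_j F_{jm}F_{(m+1)(m+1)})^2 - \sum_j F_{jn}$.

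The heart of the argument is the telescoping identity. For a fixed tournament $x$, let $a_m \in \{1,\dots,m\}$ denote the (unique) element that beats everyone in $\{1,\dots,m\}$ if one exists, and say $a_m$ is undefined otherwise. The squared term for index $m$ is $\bigl(F_{(m+1)(m+1)}(x) - \sum_{j=1}^m F_{jm}(x)F_{(m+1)(m+1)}(x)\bigr)^2$. Here $F_{(m+1)(m+1)}(x) = 1$ iff $m+1$ beats all of $1,\dots,m$, i.e. $a_{m+1} = m+1$; and $\sum_{j=1}^m F_{jm}(x) = 1$ iff $a_m$ is defined, else $0$. So each squared term is $1$ exactly when $a_{m+1} = m+1$ and $a_m$ is undefined, and $0$ otherwise. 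The final sum $\sum_j F_{jn}(x)$ is $1$ iff $a_n$ is defined. The plan is to show that $\sum_{m=1}^{n-1}[\,a_{m+1}=m+1 \text{ and } a_m \text{ undefined}\,] - [a_n \text{ defined}] = -1$ for every tournament — which, combined with the transitivity-axiom reasoning, would finish the feasibility-assignment case. On a transitive tournament (a total order), $a_m$ is always defined and never undefined going up, so each squared term is $0$ and $[a_n \text{ defined}] = 1$, giving $-1$; the point is to handle general (intransitive) tournaments and see the telescoping still yields $-1$, tracking how the ``minimum exists'' indicator can turn on and off as $m$ grows.

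The remaining and more delicate case is assignments $x$ that violate some transitivity axiom, where the $T_{jmk}$ terms are no longer all zero. Here I would argue directly: I want to show that introducing a transitivity violation changes the right-hand side by $0$ relative to... — more precisely, that for \emph{every} tournament (transitive or not) the full expression still evaluates to $-1$, by pairing each $T_{jmk}(x) = 1$ event against a corresponding change in the squared terms. Concretely, if at stage $m+1$ the element $j$ was the min of $\{1,\dots,m\}$, element $m+1$ beats $j$, and $k$ is the resulting new min of $\{1,\dots,m\}\setminus\{j\}$ ``below'' $m+1$ — then inserting $m+1$ destroyed a minimum that the squared-term bookkeeping would otherwise have mistakenly carried forward, and the $-T_{jmk}$ correction compensates for exactly that discrepancy. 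The main obstacle, and where I expect to spend the most care, is this case analysis: making precise the claim that the corrections $\sum_{j,k} T_{jmk}$ exactly account for the failure of $F_{(m+1)(m+1)} - \sum_j F_{jm}F_{(m+1)(m+1)}$ to track ``a minimum is preserved'' when transitivity fails. I would structure it as: fix $x$, walk $m$ from $1$ to $n-1$, maintain the invariant that the partial right-hand sum equals $-[\,a_{m+1}\text{ is defined}\,]$ (or the appropriate bookkeeping quantity), and check the invariant is preserved at each step in all cases (min persists, min newly appears, min disappears with or without a $T$-correction firing). The induction on $n$ that the authors presumably also have available could serve as a cross-check but I expect the direct assignment-by-assignment verification to be the cleanest route to a self-contained proof.
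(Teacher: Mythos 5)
Your route is sound and would yield a complete proof, but it is a semantic (pointwise) version of what the paper does algebraically. Since the variables are Boolean and $x_{ji}$ is just shorthand for $1-x_{ij}$, checking the identity on every tournament is indeed equivalent to checking it modulo the stated axioms, and your per-step bookkeeping is exactly right: setting $b_m = \1[\text{$[m]$ has a minimum}] = \sum_j F_{jm}(x)$, each bracketed summand (squared term minus the $T$-corrections) evaluates to $b_{m+1}-b_m$, so the sum telescopes to $b_n - b_1 = b_n - 1$ and subtracting $\sum_j F_{jn} = b_n$ gives $-1$. The paper instead proves the single polynomial identity $F_{jm} = F_{j(m+1)} + \sum_{k}T_{jmk} + F_{jm}F_{(m+1)(m+1)}$ (by splitting on whether $j$ or $m+1$ wins and then on the least-indexed $k$ beating $m+1$), notes the squared term is $\{0,1\}$-valued hence equals its own square, and telescopes over $(j,m)$. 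Your approach buys a self-contained verification with no polynomial manipulation; the paper's buys a reusable identity and makes the idempotence of the square explicit. The content of the two telescopes is the same.

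Two points in your write-up need repair before the case analysis goes through cleanly. First, your reading of $T_{jmk}$ is wrong: $T_{jmk}(x)=1$ requires $j=\min[m]$, $m+1$ beats $j$, $k$ beats $m+1$, and $m+1$ beats every $i\in[k-1]\setminus\{j\}$ --- so $k$ is the \emph{least-indexed} element of $[m]\setminus\{j\}$ beating $m+1$, not the second-smallest in the order; this is precisely what makes $\sum_{k}T_{jmk}(x)$ equal to $1$ (not more) in the one nontrivial case. Second, your interim claim that $\sum_{m}\1[a_{m+1}=m+1\wedge a_m\text{ undefined}]-\1[a_n\text{ defined}]=-1$ for \emph{every} tournament is false (take a $3$-cycle with $n=3$: it gives $0$); it only holds once the $T$-corrections are included, which is the case you defer. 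The decisive case to check is: $[m]$ has a minimum $j_0$, $m+1$ beats $j_0$, but some $k$ beats $m+1$; then the squared term is $0$, $b_{m+1}=0$, and exactly one $T_{j_0mk}$ fires, contributing the needed $-1$. With those fixes your invariant argument closes.
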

\begin{proof}
Our key building block is the following lemma.
\begin{lemma}\label{lem:buildingblock}
For all $m \in [1,n-1]$ and all $j \in [1,m]$, 
\[
F_{jm} = F_{j(m+1)} + \sum_{k \in [m] \setminus \{j\}}{T_{jmk}} + 
F_{jm}F_{(m+1)(m+1)}
\]
\end{lemma}
\begin{proof}
First note that 
\[
F_{jm} = F_{jm}x_{j(m+1)} + F_{jm} x_{(m+1)j} = F_{j(m+1)} + F_{jm}x_{(m+1)j}
\]
We now use the following proposition
\begin{proposition}
\[
F_{jm}x_{(m+1)j} = \sum_{k \in [m] \setminus \{j\}}{F_{jm}x_{(m+1)j}x_{k(m+1)}\prod_{i \in [k-1] \setminus \{j\}}{x_{(m+1)i}}} + F_{jm}F_{(m+1)(m+1)}x_{(m+1)j}
\]
\end{proposition}
\begin{proof}
If $x_{k'(m+1)} = 0$ for all $k \in [m]$ then $x_{k(m+1)}\prod_{i \in [k-1] \setminus \{j\}}{x_{(m+1)i}} = 0$ for all $k \in [m]$ and $F_{(m+1)(m+1)} = 1$. Otherwise, let $k'$ be the first index in $[m]$ such that  $x_{k'(m+1)} = 1$. Now observe that $F_{(m+1)(m+1)} = 0$ and 
$x_{k(m+1)}\prod_{i \in [k-1] \setminus \{j\}}{x_{(m+1)i}} = 1$ if $k = k'$ and is $0$ if $k \neq k'$.
\end{proof}
Finally, we observe that $F_{jm}F_{(m+1)(m+1)}x_{(m+1)j} = F_{jm}F_{(m+1)(m+1)}$ as $x_{(m+1)j}$ is contained in $F_{(m+1)(m+1)}$. Putting everything together, 
\[
F_{jm} = F_{j(m+1)} + \sum_{k \in [m] \setminus \{j\}}{T_{jmk}} + 
F_{jm}F_{(m+1)(m+1)}
\]
\end{proof}

We now verify that our proof, which we restate here for convenience, is indeed an equality modulo the axioms that $x_{ij}^2 = x_{ij}$ and $x_{ij}x_{ji} = 0$ for all distinct $i,j \in [n]$.
\[
-1 = \sum_{m=1}^{n-1}{\left(\left(F_{(m+1)(m+1)} - \sum_{j=1}^{m}{F_{jm}F_{(m+1)(m+1)}}\right)^2 - \sum_{j=1}^{m}{\sum_{k \in [m] \setminus \{j\}}{T_{jmk}}}\right)} 
- \sum_{j=1}^{n}{F_{jn}}
\]
Observe that $F_{(m+1)(m+1)}^2 = F_{(m+1)(m+1)}$, $F_{jm}^2 = F_{(jm)}$, and for all distinct $j,j' \in [m]$, $F_{jm}F_{j'm} = 0$.
Thus,
\[
\left(F_{(m+1)(m+1)} - \sum_{j=1}^{m}{F_{jm}F_{(m+1)(m+1)}}\right)^2 = \left(F_{(m+1)(m+1)} - \sum_{j=1}^{m}{F_{jm}F_{(m+1)(m+1)}}\right)
\]
By Lemma \ref{lem:buildingblock}, 
$F_{jm}F_{(m+1)(m+1)} + \sum_{k \in [m] \setminus \{j\}}{T_{jmk}} = F_{jm} - F_{j(m+1)} $. This implies that \begin{align*}
-\sum_{m=1}^{n-1}{\sum_{j=1}^{m}{\left(F_{jm}F_{(m+1)(m+1)} + \sum_{k \in [m] \setminus \{j\}}{T_{jmk}} \right)}}&=
-\sum_{j=1}^{n-1}{\sum_{m = j}^{n-1}{\left(F_{jm} - F_{j(m+1)} \right)}} \\
&= \sum_{j=1}^{n-1}{(F_{jn} - F_{jj})}
\end{align*}
Now observe that 
$\sum_{m=1}^{n-1}{F_{(m+1)(m+1)}} = \sum_{j = 2}^{n-1}{F_{jj}} + F_{nn} = -1 + \sum_{j=1}^{n-1}{F_{jj}} + F_{nn}$. Putting everything together, the equality holds. Since each $T_{jmk}$ is a weakening of a transitivity axiom and each $F_{jn}$ is a non-minimality axiom, this indeed gives a sum of squares proof that these axioms are unsatisfiable.
\end{proof}
\section{Open Problems}
Our work raises a number of open problems. For the pigeonhole principle, while we have proved an exponential total coefficient size lower bound on Nullstellensatz proofs, there is a lot of room for further work. Some questions are as follows.
\begin{enumerate}
    \item For the pigeonhole principle, our lower bound is $2^{\Omega(n)}$ while the trivial upper bound is $O(n!)$. Can we improve the lower and/or upper bound?
    \item If we increase the number of pigeons from $n$ to $n+1$ while still having $n - 1$ holes, our lower bound proof no longer applies. Can we prove a total coefficient size lower bound on Nullstellensatz when there are $n+1$ or more pigeons? How does the minimum total coefficient size of a proof depend on the number of pigeons?
    \item Can we show total coefficient size lower bounds for resolution-like proofs of the pigeonhole principle?
    \item How much of an effect would adding the axioms that pigeons can only go to one hole have on the minimum total coefficient size needed to prove the pigeonhole principle?
\end{enumerate}
We are still far from understanding the total coefficient size of proofs for the ordering principle. Two natural questions are as follows.
\begin{enumerate}
    \item Can we prove superpolynomial lower bounds on the total coefficient size of Nullstellensatz proofs for the ordering principle and/or improve the $O(2^n)$ upper bound?
    \item Are there resolution-like proofs for the ordering principle with polynomial total coefficient size? If so, this shows that the seemingly dynamic $O(n^3)$ size resolution proof of the ordering principle \cite{staalmarck1996short} can be captured by a one line resolution-like proof. If not, this gives a natural example separating resolution proof size and the total coefficient size of resolution-like proofs.  
\end{enumerate}
Finally, we can ask what relationships and separations we can show between all of these different proof systems. Some questions are as follows.
\begin{enumerate}
    \item Are there natural examples where the minimum total coefficient size is very different (either larger or smaller) than the minimum size for Nullstellensatz, resolution-like, or sum of squares proofs?
    \item Can the minimum total coefficient size of a strong proof system be used to lower bound the size of another proof system? For example, can resolution proof size be lower bounded by the minimum total coefficient size of a sum of squares proof or can we find an example where there is a polynomial size resolution proof but any sum of squares proof has superpolynomial total coefficient size?
\end{enumerate}
\bibliographystyle{alpha}
\bibliography{main}
\end{document}